\definecolor{hotcolor}{rgb}{1,0,0}
\newtheorem{mydef}{Definition}
\newtheorem{mylem}{Lemma}
\newtheorem{myprop}{Proposition}
\newtheorem{mythm}{Theorem}
\newtheorem{myrem}{Remark}
\newtheorem{myass}{Assumption}
\journal{Journal of Computational Physics}
\begin{document}

\begin{frontmatter}

%% Title, authors and addresses

%% use the tnoteref command within \title for footnotes;
%% use the tnotetext command for theassociated footnote;
%% use the fnref command within \author or \address for footnotes;
%% use the fntext command for theassociated footnote;
%% use the corref command within \author for corresponding author footnotes;
%% use the cortext command for theassociated footnote;
%% use the ead command for the email address,
%% and the form \ead[url] for the home page:
%% \title{Title\tnoteref{label1}}
%% \tnotetext[label1]{}
%% \author{Name\corref{cor1}\fnref{label2}}
%% \ead{email address}
%% \ead[url]{home page}
%% \fntext[label2]{}
%% \cortext[cor1]{}
%% \address{Address\fnref{label3}}
%% \fntext[label3]{}

\title{\textbf{Numerical Method of Characteristics \\ for One--Dimensional Blood Flow}}

%% use optional labels to link authors explicitly to addresses:
%% \author[label1,label2]{}
%% \address[label1]{}
%% \address[label2]{}

%\fntext[inst1]{Baylor College of Medicine, Department of Pediatric Cardiology}
%\fntext[inst2]{Rice University, Department of Computational and Applied Mathematics}

\author[label1]{Sebastian Acosta\corref{cor1}}
\ead{sacosta@bcm.edu}

\author[label2]{Charles Puelz}
\ead{cpuelz@rice.edu}

\author[label2]{B\'eatrice Rivi\`ere}
\ead{riviere@rice.edu}

\author[label1]{Daniel J. Penny}
\ead{djpenny@texaschildrens.org}

\author[label1]{Craig G. Rusin}
\ead{crusin@bcm.edu}

\address[label1]{Department of Pediatrics -- Cardiology, Baylor College of Medicine, Texas Children's Hospital, Texas}

\address[label2]{Department of Computational and Applied Mathematics, Rice University, Texas}

\cortext[cor1]{Corresponding author}

\begin{abstract}
Mathematical modeling at the level of the full cardiovascular system requires the numerical approximation of solutions to a one-dimensional nonlinear hyperbolic system describing flow in a single vessel. This model is often simulated by computationally intensive methods like finite elements and discontinuous Galerkin, while some recent applications require more efficient approaches (e.g. for real-time clinical decision support, phenomena occurring over multiple cardiac cycles, iterative solutions to optimization/inverse problems, and uncertainty quantification). Further, the high speed of pressure waves in blood vessels greatly restricts the time step needed for stability in explicit schemes. We address both cost and stability by presenting an efficient and unconditionally stable method for approximating solutions to diagonal nonlinear hyperbolic systems. Theoretical analysis of the algorithm is given along with a comparison of our method to a discontinuous Galerkin implementation. Lastly, we demonstrate the utility of the proposed method by implementing it on small and large arterial networks of vessels whose elastic and geometrical parameters are physiologically relevant.
\end{abstract}

%% keywords here, in the form: keyword \sep keyword

%% PACS codes here, in the form: \PACS code \sep code

%% MSC codes here, in the form: \MSC code \sep code
%% or \MSC[2008] code \sep code (2000 is the default)

\begin{keyword}
Blood flow \sep computational hemodynamics \sep characteristics \sep wave propagation. 
\end{keyword}

\end{frontmatter}

%% \linenumbers

%% main text

%%%%%%%%%%%%%%%%%%%%%%%%%%%%%%%%%%%%%%%%%%%%%%%%%%%%%%%
%%%%%%%%% N E W -- S E C T I O N %%%%%%%%%%%%%%%%%%%%%%
%%%%%%%%%%%%%%%%%%%%%%%%%%%%%%%%%%%%%%%%%%%%%%%%%%%%%%%

\section{Introduction} \label{Section:Intro}

Recent research on hemodynamic models utilizes a set of equations describing blood flow in a single vessel.  In this model, the variables of interest are the vessel cross-sectional area $A = A(x,t)$ and the average blood velocity in the axial direction $x$ given as $u = u(x,t)$. Conservation of mass and balance of momentum respectively result in the following system of equations:
\begin{equation}
\label{eq:model}
\begin{aligned}
\frac{\partial A}{\partial t} + \frac{\partial (Au)}{\partial x} &= 0 \\
\frac{\partial u}{\partial t} + \frac{\partial}{\partial x}\left(\frac{u^2}{2} + \frac{p}{\rho}\right) &= - 8 \pi  \nu \frac{u}{A}.
\end{aligned}
\end{equation}
We call (\ref{eq:model}) the $(A,u)$--system, where $p=p(A)$ is the fluid pressure defined below in (\ref{eq:state}).  Here $\rho$ is the density of blood and $\nu$ is its kinematic viscosity. The assumptions of this model include the following: blood is an incompressible, viscous fluid flowing in a straight cylinder with compliant walls, and the characteristic length of the vessel (along the axial direction) is much larger than the characteristic radius.  A further assumption involves the functional form of the  velocity profile: for the left hand side of the momentum balance equation, a flat profile is assumed, whereas a parabolic profile is specified for the viscous term on right side. Other types of profiles for the viscous term may be used, see for example \cite{MWL09}. We follow the assumptions of \cite{FLQ03,Sh-Fr-Pei-Par-2003,MN08,SFPF03}.  This typical simplification, although inconsistent, is important since one may explicity compute the Riemann invariants of the system, and the viscous term remains finite.  For further details and a discussion of the related $(A,Q:=Au)$ model, see for example the works of Canic--Kim \cite{CK03}, Formaggia \textit{et al}. \cite{FGNQ01, FNQ02} and Sherwin \textit{et al}. \cite{SFPF03}. We choose to work with the $(A,u)$ system since our discontinuous Galerkin formulation is based on the work of Sherwin \textit{et al}.

To close the system, the functional relationship for the pressure $p$ is provided by the state equation
\begin{equation}
\label{eq:state}
p = p_{\rm ext} + \beta \big(A^{1/2} - A_{0}^{1/2}\big)
\end{equation}
where $p_{\rm ext}$ is the external pressure and $A_0$ is the vessel cross-sectional area for vanishing transmural pressure difference. The coefficient $\beta$ depends on the thickness, Young's modulus, Poisson's ratio and the unperturbed radius of the vessel \cite{SFPF03,Sh-Fr-Pei-Par-2003,MN08}. The above equation of state, which neglects viscoelasticity, renders the hyperbolicity of the system (\ref{eq:model})--(\ref{eq:state}). Since the method of characteristics is heavily reliant on hyperbolicity, we are not able to deal with viscoelastic effects in this paper. For detailed studies on the modeling and effects of viscoelasticity, see \cite{FLQ03,BGR08,AKK11,WFL14}.

These equations appear in recent literature for simulations of blood flow in a network of connected vessels, where system (\ref{eq:model}) models flow in each vessel and appropriate transmission conditions between vessels are specified.  As an example, vessel-network models of the full cardiovascular system provide important insight into different clinical and physiological questions.  Clinicians and engineers interested in the fluid dynamics around the heart may couple a 2d or 3d model of fluid flow close to the heart to a 1d network model of the arterial tree (and perhaps the venous tree).  This modeling approach has several benefits: first, the high fidelity 1d model of the arterial tree replaces overly simplistic lumped parameter models. Second, one may interrogate the 1d arterial tree model to better understand fluid flow in the peripheral circulation and the reflection of pressure waves.  Lastly, simulations of variants of the 1d model align well with experimental data from single tube, arterial model, and in vivo studies (see e.g. \cite{ANN09,OPK00,BGR08,AKK11,MAP07}). For some examples of 1d models derived from system (\ref{eq:model}) or the $(A,Q)$ system coupled to higher dimensional models see \cite{BFU07, FNQV99, BF13}.  Other clinical applications include stent flow simulations, models of fetus and neonate circulations, and surgical planning \cite{FNQ02, MWL09, Myn11,SFPF03}.  This collection of references, although not comprehensive, is meant to emphasize the versatility of (\ref{eq:model}).   
                  
Finite element, finite volume, discontinuous Galerkin, and other methods arising from weak formulations are successfully used for the spatial discretization of the $(A,u)$ or $(A,Q)$ systems \cite{DL13,FLQ03,SFPF03}. Although these methods maintain attractive mathematical properties, they are computationally intensive, and this complexity is magnified in simulations of vessel networks. For instance, the speed of pressure waves in blood vessels dictates the time step required for stability in explicit schemes. Unfortunately, for physiologically relevant choices of parameters, this speed may be much larger (at least one order of magnitude) than the velocity of blood flow. Moreover, this wave speed displays increasing variability as the arterial tree branches out \cite{Myn11}. This implies that the inclusion of smaller arteries in the model (to obtain more realistic and accurate simulations) may result in a more stringent stability condition. For a side--by--side comparision of several methods, see the recent paper by Wang \textit{et al}. \cite{WFL14}. These authors compare methods for simulating (\ref{eq:model}) based on several metrics, including running time for one cardiac cycle. 

In some instances, a more expensive discretization from a weak formulation is appropriate. But for our applications, we envision a 1d vessel--network model as a component in clinical decision support systems requiring simulation of multiple cardiac cycles. Fast iterative or repeated simulations are also needed for uncertainty quantification or to solve inverse problems via optimization \cite{Lassila-2013}. In these cases, close to real-time simulation is essential, and as such, the method for approximating solutions to (\ref{eq:model}) must be efficient and unconditionally stable. 
Fortunately, system (\ref{eq:model})--(\ref{eq:state}) has explicitly defined characteristic variables, and under the assumption of strict hyperbolicity, we may apply a numerical method of characteristics (NMC) for solving these equations. The method we propose is explicit in time (which makes it computational efficient) and unconditionally stable.

Many methods for numerically solving differential equations based on the characteristics have been proposed in the past. Some address the transport of a certain solvent or convection--dominated diffusion equations \cite{Dou-Rus-1982,Has-Liv-Ber-1983,Ew-Rus-Whe-1984,Rus-1985,Kri-Hay-Rus-1989,Suli-Ware-1991,Mor-Pri-Sul-1988}. Other works deal with approximations for Navier--Stokes equations in the absence of fluid--structure interaction \cite{Pir-1982,Sul-1988,Ach-Gue-2000} where the convective--derivative of the fluid velocity is treated with the method of characteristics. The concept behind the numerical method of characteristics also constitutes a main ingredient in the CIP method developed in \cite{Tak-Yab-1987,Ida-Yab-1995,Tan-Nak-Yab-2000,Yoon-Yabe-1999}. Furthermore, variants of this method have very recently been applied in the hemodynamics context \cite{MG08,KVS14,BH12}. Unfortunately, these latter publications do not rigorously address stability and convergence. Wang and Parker \cite{Wang-Parker-2004} also propose a method of characteristics for simulating circulation in the arterial network. However, in contrast to our work, they consider a fully linearized approach where the nonlinearities arising from convection and the pressure dependent wave speed are neglected. For a quantification of these nonlinear effects, see \cite{Sh-Fr-Pei-Par-2003,Myn-Dav-Mal-Pen-Smo-2012} and references therein. 

This manuscript details the application of the NMC to fully nonlinear blood flow (Sections \ref{Section:Char}-\ref{Section:Algorithm}) and develops the standard numerical analysis including stability and convergence (Section \ref{Section:Analysis}). Our analysis is supported with numerical experiments to confirm the proven rate of convergence and to compare the NMC with a discontinous Galerkin (dG) discretization of (\ref{eq:model}). We conclude with an application of the NMC method to an arterial network of vessels (Section \ref{Section:NumExp}).

                                                                                                                                                                                                                                                                                                                                                 %%%%%%%%%%%%%%%%%%%%%%%%%%%%%%%%%%%%%%%%%%%%%%%%%%%%%%%
%%%%%%%%% N E W -- S E C T I O N %%%%%%%%%%%%%%%%%%%%%%
%%%%%%%%%%%%%%%%%%%%%%%%%%%%%%%%%%%%%%%%%%%%%%%%%%%%%%%

\section{Characteristics for one-dimensional blood flow} \label{Section:Char}

In this section, we recapitulate some useful mathematical properties of (\ref{eq:model}).  First, let us consider a general system of the form: 
\begin{equation}
\label{eq:gen1}
\frac{\partial {\bf U}}{\partial t} + \frac{\partial {\bf F}({\bf U})}{\partial x} = {\bf S}({\bf U})
\end{equation}
where ${\bf U} \in \mathbb{R}^2$ \Big(${\bf U} = (u_1,u_2)^T$\Big). This system may be written in a quasilinear form, namely
\begin{equation}
\label{eq:gen2}
\frac{\partial {\bf U}}{\partial t} + \nabla_{\bf U}{\bf F}\frac{\partial {\bf U}}{\partial x} = {\bf S}({\bf U})
\end{equation}
where $\nabla_{{\bf U}}{\bf F}$ is the $2 \times 2$ Jacobian matrix of ${\bf F}$ and the source function may change to include some terms from differentiating ${\bf F}$.  As we shall see, (\ref{eq:model}) may be expressed in this form.  Let the left eigenvectors of $\nabla_{\bf U}{\bf F}$ be given as $\{ {\bf l}_1({\bf U}), {\bf l}_2 ({\bf U}) \}$ with eigenvalues $\{\lambda_1 ({\bf U}), \lambda_2 ({\bf U}) \}$ (we will henceforth drop the notation indicating their dependence on ${\bf U}$).  The system (\ref{eq:gen1}) is \textit{strictly hyperbolic} provided the Jacobian matrix has real distinct eigenvalues.

The general idea for the method of characteristics is to transform system (\ref{eq:gen1}) by diagonalizing the principal part of the differential equation in the hope that one finds functions remaining constant along particular curves. With this in mind, consider $Z_i:\mathbb{R}^2 \rightarrow \mathbb{R}$ whose gradient $\nabla_{\bf U}Z_i$ is parallel to ${\bf l}_i$; these are called \textit{Riemann--invariants} (see e.g. \cite[p. 637]{Evans2010}).  Now, define functions $V_1$ and $V_2$ from $Z_1$ and $Z_2$ like 
\begin{align}
\label{eq:char1}
V_1(x,t)&=Z_1({\bf U}(x,t)) + k_{1}(x,t), \\
\label{eq:char2}
V_2(x,t)&=Z_2({\bf U}(x,t)) + k_{2}(x,t),
\end{align}
where $k_{i}$ are arbitrary constants of integration, that is, $\nabla_{\bf U} k_{i} = 0$. We refer to $V_1$ and $V_2$ as the \textit{characteristics variables} of system (\ref{eq:gen1}).  From the chain rule combined with (\ref{eq:gen2}), $V_1$ and $V_2$ satisfy
\begin{align}
\label{eq:char3}
\frac{\partial V_1}{\partial t} + \lambda_1 \frac{\partial V_1}{\partial x}  &= R_{1} :=  \nabla_{\bf U}Z_1^T {\bf S}({\bf U}) + \frac{\partial k_1}{\partial t} + \lambda_1 \frac{\partial k_1}{\partial x} \\
\label{eq:char4}
\frac{\partial V_2}{\partial t} + \lambda_2 \frac{\partial V_2}{\partial x}  &= R_{2} :=  \nabla_{\bf U}Z_2^T {\bf S}({\bf U}) + \frac{\partial k_2}{\partial t} + \lambda_2 \frac{\partial k_2}{\partial x}.
\end{align}
The next statement is important for our method. It is easy to see that the following holds.
\begin{myprop}
\label{curves}
The function $V_i(x,t) - \int_0^t R_{i}(x,s) ds$ is constant along the curve $(\gamma_i (s), s)$ satisfying
\[
\frac{d{\gamma}_i}{ds} = \lambda_i(\gamma_i (s), s).
\] 
\end{myprop}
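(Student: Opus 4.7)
The plan is to verify the claim by a direct chain-rule computation along the characteristic curve. Interpreting the integral in the statement as being evaluated along the characteristic (so the $x$ argument in $R_i(x,s)$ is the running position $\gamma_i(s)$), I would fix an index $i\in\{1,2\}$, let $\gamma_i$ be a solution of the ODE $\dot\gamma_i(s)=\lambda_i(\gamma_i(s),s)$ on some interval, and define
\[
\phi(t) \;=\; V_i(\gamma_i(t),t) \;-\; \int_0^t R_i(\gamma_i(s),s)\,ds.
\]
The goal is to show $\phi'(t)\equiv 0$, which yields $\phi(t)=\phi(0)$, i.e.\ the asserted constancy.

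First I would differentiate $\phi$. By the chain rule applied to the first term and the fundamental theorem of calculus applied to the second,
\[
\phi'(t) \;=\; \frac{\partial V_i}{\partial t}(\gamma_i(t),t) \;+\; \frac{\partial V_i}{\partial x}(\gamma_i(t),t)\,\dot\gamma_i(t) \;-\; R_i(\gamma_i(t),t).
\]
Next I would substitute the defining ODE $\dot\gamma_i(t)=\lambda_i(\gamma_i(t),t)$, which converts the right-hand side into
\[
\left(\frac{\partial V_i}{\partial t} + \lambda_i\frac{\partial V_i}{\partial x} - R_i\right)\bigg|_{(\gamma_i(t),t)}.
\]
Finally, the transport equations (\ref{eq:char3})--(\ref{eq:char4}), which were derived earlier from the quasilinear form (\ref{eq:gen2}) together with the Riemann-invariant condition $\nabla_{\bf U}Z_i\parallel {\bf l}_i$, assert that exactly this combination vanishes pointwise. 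Hence $\phi'(t)=0$, which is what we needed.

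There is no substantive obstacle here; the argument is a one-line chain-rule verification. The only point that warrants care is the notational one: the symbol $x$ appearing in the integrand $R_i(x,s)$ must be understood as the characteristic trace $\gamma_i(s)$, not a frozen spatial variable, otherwise the chain-rule cancellation against $\lambda_i\partial_x V_i$ would not occur. Once that reading is fixed, the proposition follows immediately, and the identity
\[
V_i(\gamma_i(t),t) \;=\; V_i(\gamma_i(0),0) \;+\; \int_0^t R_i(\gamma_i(s),s)\,ds
\]
is the form that will actually be exploited in the numerical scheme of the subsequent sections.
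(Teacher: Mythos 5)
Your chain-rule verification is correct and is precisely the argument the paper has in mind: the authors state the proposition with only the remark that ``it is easy to see,'' and the intended justification is exactly the computation $\frac{d}{dt}\big[V_i(\gamma_i(t),t)-\int_0^t R_i(\gamma_i(s),s)\,ds\big]=\partial_t V_i+\lambda_i\,\partial_x V_i-R_i=0$ via (\ref{eq:char3})--(\ref{eq:char4}). Your reading of the integrand as evaluated along the characteristic trace $\gamma_i(s)$ is also the correct one, as confirmed by how the proposition is later used in (\ref{eq:intchar3}) and Lemma \ref{lem:propa}.
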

\noindent

We derive the characteristic variables for system (\ref{eq:model}) by following equations (\ref{eq:char1}) -- (\ref{eq:char4}) with Proposition \ref{curves}. Assuming constant $\beta$, we rewrite the system with the Jacobian of ${\bf F}$ as follows,
\[
\underbrace{\frac{\partial}{\partial t}\begin{bmatrix}
A \\
u
\end{bmatrix}}_{\partial {\bf U}/ \partial t} + 
\underbrace{\begin{bmatrix}
u & A \\ 
c^2 / A & u 
\end{bmatrix}}_{\nabla_{\bf U} {\bf F}}
\underbrace{\frac{\partial}{\partial x}
\begin{bmatrix}
A \\
u
\end{bmatrix}}_{\partial {\bf U}/ \partial x} = \underbrace{
\begin{bmatrix} 0 \\ 
- 8 \pi \nu \frac{u}{A}  + 4 c_{0} \frac{d c_0}{dx}
 \end{bmatrix}}_{{\bf S}(\bf U)},
\]
where the perturbed and unperturbed wave speeds are given by
\begin{align}
c = c(A) = \left(\frac{\beta \sqrt{A}}{2 \rho}\right)^{1/2} \quad \text{and} \quad  c_{0} = c(A_{0}). \label{eqn:speed}
\end{align}
The left eigenvectors and eigenvalues for $\nabla_{\bf U}{\bf F}$ are
\begin{align}
&\lambda_1 = u + c, \qquad {\bf l}_1 = \begin{bmatrix} c/A \\ 1 \end{bmatrix}, \label{eq:Eigen01} \\
&\lambda_2 = u - c, \qquad {\bf l}_2 = \begin{bmatrix} -c/A \\ 1 \end{bmatrix}. \label{eq:Eigen02}
\end{align}
If we set $\nabla_{\bf U}Z_1 = {\bf l}_1$ and $\nabla_{\bf U}Z_2  = {\bf l}_2$, then with ${\bf U} = (A,u)^{T}$ we have
\begin{align*}
&\frac{\partial Z_1}{\partial A} = \frac{c}{A}, \qquad \frac{\partial Z_1 }{\partial u} = 1, \\
&\frac{\partial Z_2 }{\partial A} = -\frac{c}{A}, \qquad \frac{\partial Z_2}{\partial u} = 1.
\end{align*}
For convenience we choose $k_{1} = - 4 c_{0}$ and $k_{2} = 4 c_{0}$. Integrating, we obtain:
\begin{align}
\label{eq:v1v2_1}
V_1(x,t) &= u(x,t) + 4 \left( c(A(x,t)) - c_{0}(x) \right), \\
\label{eq:v1v2_2}
V_2(x,t) &= u(x,t) - 4 \left(  c(A(x,t)) - c_{0}(x) \right) ,
\end{align}
where these variables satisfy the system
\begin{equation}
\label{eq:1dchar}
\begin{aligned}
\frac{\partial V_1}{\partial t} + (u+c) \frac{\partial V_1}{\partial x}  &= R_{1}  = - 8 \pi \nu \frac{u}{A} - 4 (u + c - c_{0}) \frac{d c_{0}}{d x},  \\
\frac{\partial V_2}{\partial t} + (u-c) \frac{\partial V_2}{\partial x}  &= R_{2} = - 8 \pi \nu \frac{u}{A} + 4 (u - c + c_{0}) \frac{d c_{0}}{d x}.
\end{aligned} 
\end{equation}
One may recover the cross-sectional area (and hence the pressure or wave speed) and velocity from the characteristic variables, and vice versa. Specifically, 
\begin{equation}
u = \frac{V_{1} + V_{2}}{2} \quad \text{and} \quad c - c_{0} = \frac{V_{1} - V_{2}}{8}. \label{eq:Ch_vs_Phys}
\end{equation}

The above derivation reveals that the characteristic variables propagate at speeds $u \pm c$, where $u$ is the velocity of blood. For physiologically relevant parameter values, $c \gg |u|$.  In particular, this relationship between $u$ and $c$ implies that $\lambda_1>0$ and $\lambda_2<0$, that is, the characteristic variables propagate in opposite directions.

Most explicit time discretizations require a CFL--type restriction on the timestep determined by $c$ despite the fact that the speed of blood $u$ is much smaller. To avoid this strong restriction, we propose a method that is stable regardless of the chosen timestep.

%%%%%%%%%%%%%%%%%%%%%%%%%%%%%%%%%%%%%%%%%%%%%%%%%%%%%%%%%%%%%%%%%%%%%%%
%%%%%%%%% N E W %%%% S E C T I O N %%%%%%%%%%%%%%%%%%%%%%%%%%%%%%%%%%%%
%%%%%%%%%%%%%%%%%%%%%%%%%%%%%%%%%%%%%%%%%%%%%%%%%%%%%%%%%%%%%%%%%%%%%%%

\section{Algorithm} \label{Section:Algorithm}

For the presentation of the algorithm, let us focus on the following initial value problem,
\begin{eqnarray}
\label{eq:orig1}
\frac{\partial V_1}{\partial t} + \lambda_1(V_{1},V_{2},x,t) \frac{\partial V_1}{\partial x} &=& R_{1}(V_1,V_2,x,t) \\
\frac{\partial V_2}{\partial t} + \lambda_2(V_{1},V_{2},x,t) \frac{\partial V_2}{\partial x} &=& R_{2}(V_1,V_2,x,t) \\
V_1(x,0) &=& V_{1}^0(x) \\
\label{eq:orig2}
V_2(x,0) &=& V_{2}^0(x)
\end{eqnarray}
defined on intervals $x \in [a,b]$ and $t \in [0,T]$, and augmented by periodic boundary conditions of the form
\begin{equation*}
V_i(a,t) = V_i(b,t) \quad i = 1,2.
\end{equation*}

Now we introduce some notation. We use the following supremum norms in our analysis:
\begin{equation}
\|q\|:= \sup_{x \in [a,b]} |q(x)|  \quad \text{and}  \quad
\|p\|_T := \sup_{x\in[a,b], \,\, t \in [0,T]} |p(x,t)|.
\end{equation}
Let dashes denote derivatives in space and dots denote derivatives in time, i.e.  $p' := \partial p / \partial x$ and $\dot{p} := \partial p / \partial t$.  For the spatial discretization, let $G_h:= \Big\{x_j = a+ j(b-a)/M, \text{ }j=0, \ldots M \Big\}$,
i.e. the collection of uniformly spaced points between $a$ and $b$ with spacing $h := (b-a)/M$.  Define $\mathcal{C}[a,b]$ to be the space of continuous functions on $[a,b]$, and $\mathcal{C}_h[a,b]$ to be the subset of continuous functions that are linear when restricted to each interval $[x_j,x_{j+1}]$ for $j = 0, \ldots M-1$. For the temporal discretization, given a positive integer $N$, define the timestep $\Delta t: = T/N$ and $t_n := n\Delta t$.

In what follows, $V_i$ refers to the exact solution whereas $W_i$ refers to the approximate solution.
The numerical method of characteristics for solving (\ref{eq:orig1}) -- (\ref{eq:orig2}) is based on the following idea: to obtain an approximation $W_i$ to $V_i$ given information on the grid $G_h$, follow the movement of the points in $G_h$ along the characteristic curves \textit{back} in time, and then assign values at the \textit{current} time via spatial interpolation of the solution.  More explicitly, from Proposition \ref{curves} with $\gamma_i(t+\Delta t) = x \in G_h$ one has
\begin{eqnarray}
\label{eq:intchar3}
V_i(x, t+\Delta t) = V_i(\gamma_i(t), t) + \int_{t}^{t+\Delta t} R_{i}(\gamma_{i}(s),s) ds.
\end{eqnarray} 
With this in mind, we have the following set of definitions.  For each $x \in [a,b]$ define the characteristic curve $\gamma_i(x,t_{n+1};t):[t_n,t_{n+1}] \rightarrow \mathbb{R}$ passing through point $x$ at time $t_{n+1}$ as the solution to the following \textit{final} value problem:
\begin{equation}
\label{eq:intchar2}
\begin{aligned}
&\frac{d\gamma_i(x,t_{n+1};t)}{dt} = \lambda_i\big(\gamma_i(x,t_{n+1};t),t\big) \\
&\gamma_i(x,t_{n+1};t_{n+1}) = x.
\end{aligned}
\end{equation}

\begin{mydef}
\label{def:curves}
Let $n=1,2,...,N$. For $x \in [a,b]$, let $\tilde{g}_i^n(x)$ ($i = 1,2$) be an approximation to the quantity 
\begin{equation}
g^n_i(x)= x - \mathcal{I}_i^n(x) :=x - \int_{t_n}^{t_{n+1}}\lambda_i\big(\gamma_i(x,t_{n+1};t),t\big)dt
\end{equation}
in the sense that
\begin{equation}
\tilde{g}_i^n(x):= x - \tilde{\mathcal{Q}}^n_i(x)
\end{equation} 
where $\tilde{\mathcal{Q}}_i^n$ is a ``pseudo--quadrature rule'' for the integral $\mathcal{I}_i^n$ computed with the approximate solution $W_i$. Define $\mathcal{Q}_i^n$ to be this same pseudo--quadrature rule computed with the exact solution $V_i$. As we will see below, the rule we define is equivalent to a linearization of the characteristic curve. An illustration of the definition of $g^{n}(x)$ and and $\tilde{g}^{n}(x)$ is displayed in Figure \ref{fig:curves}. Note that $g^n_i(x)$ and $\tilde{g}_i^n(x)$ may not lie in the interval $[a,b]$, but its definition can be easily adjusted to handle the periodic boundary condition.
\end{mydef}

\begin{figure}[H]
\begin{center}
\includegraphics[width = 0.9 \textwidth]{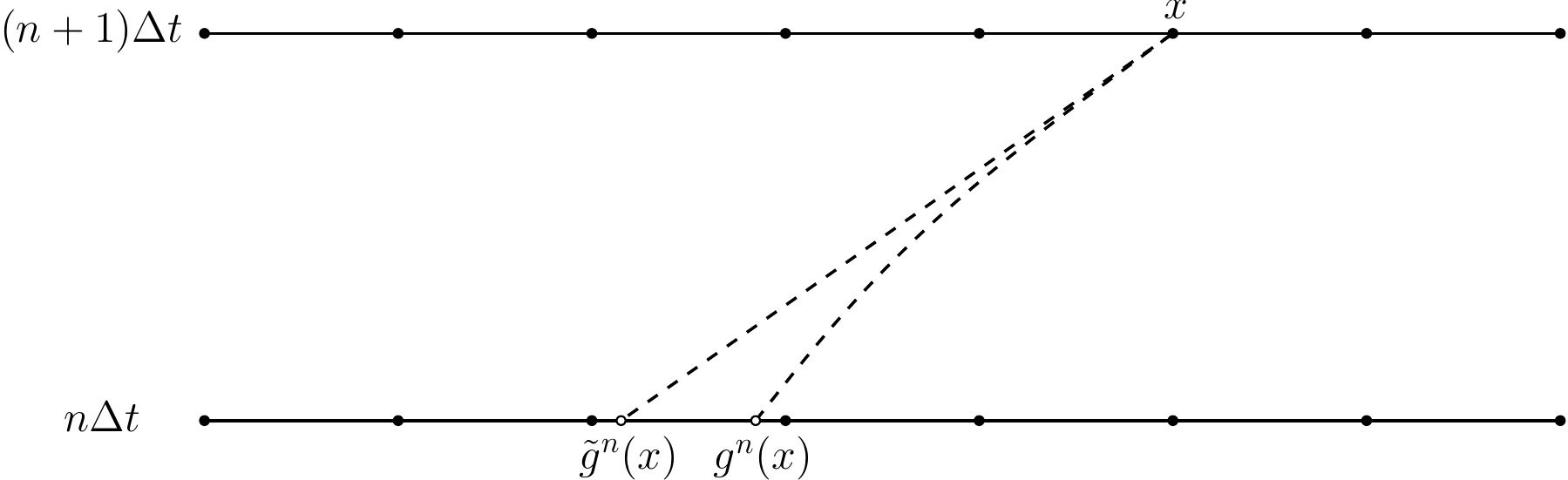}
\end{center}
\caption{The characteristic curve and its approximation. The \textit{head} of the characteristic curve is the grid point $x$, and its \textit{foot} is denoted by $g^{n}(x)$. The approximate \textit{foot}, denoted by $\tilde{g}^{n}(x)$, is obtained by a linearization of the characteristic curve given in Definition \ref{def:curves}.}
\label{fig:curves}
\end{figure}

\noindent
Take $x \in [a,b]$ and consider the characteristic curve within the time interval $[t_n,t_{n+1}]$ on which $x$ lies at time $t_{n+1}$, i.e. $\gamma_i(x,t_{n+1},t)$. To declutter notation, define $V_{i}^{n}(x) = V_{i}(x,t_{n})$ for all $n$. By Definition \ref{def:curves} and (\ref{eq:intchar2}) we have $g_i^n(x) = \gamma_i(x,t_{n+1};t_{n})$. In turn, for the solution $V_i$ one has
\begin{align*}
V_i^{n+1}(x) = V_i(\gamma_i(x,t_{n+1};t_n),t_n) + \mathcal{J}_{i}^{n}(x)  := V_i^n(g_i^n(x)) + \int_{t_{n}}^{t_{n+1}} R_{i} (\gamma_{i}(x,t_{n+1};t),t) dt.
\end{align*}
We have shown the following lemma which is nothing more than rewriting (\ref{eq:intchar3}) in more compact notation.
\begin{mylem}
\label{lem:propa}
The solutions $V_i$ to (\ref{eq:orig1}) -- (\ref{eq:orig2}) satisfy
\begin{equation}
V_i^{n+1}(x) = V_i^n(g^n(x)) + \mathcal{J}_{i}^{n}(x)  \quad \text{ for all } x \in [a,b] \text{ and } n = 1 \ldots N.
\end{equation} 
\end{mylem}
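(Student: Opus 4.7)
The plan is to apply Proposition~\ref{curves} directly to the characteristic curve attached to the space--time point $(x, t_{n+1})$ and restricted to the time interval $[t_n, t_{n+1}]$, then identify the endpoint values via the final--value problem (\ref{eq:intchar2}) and Definition~\ref{def:curves}.

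First, I would fix $x\in[a,b]$, $n\in\{0,1,\dots,N-1\}$, and $i\in\{1,2\}$, and consider the curve $t\mapsto \gamma_i(x,t_{n+1};t)$ on $[t_n,t_{n+1}]$, which by (\ref{eq:intchar2}) satisfies $d\gamma_i/dt = \lambda_i(\gamma_i(x,t_{n+1};t),t)$. This is precisely the ODE appearing in Proposition~\ref{curves}, so that proposition implies that the function
\[
t \;\longmapsto\; V_i\bigl(\gamma_i(x,t_{n+1};t),t\bigr) \;-\; \int_0^{t} R_i\bigl(\gamma_i(x,t_{n+1};s),s\bigr)\,ds
\]
is constant on $[t_n,t_{n+1}]$. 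The next step is simply to evaluate this constancy at the two endpoints: at $t=t_{n+1}$ the initial condition in (\ref{eq:intchar2}) gives $\gamma_i(x,t_{n+1};t_{n+1}) = x$, while at $t=t_n$ we have $\gamma_i(x,t_{n+1};t_n) = g_i^n(x)$ by Definition~\ref{def:curves}. Equating the two values and cancelling the common integral $\int_0^{t_n} R_i\,ds$ yields
\[
V_i^{n+1}(x) - V_i^n\bigl(g_i^n(x)\bigr) \;=\; \int_{t_n}^{t_{n+1}} R_i\bigl(\gamma_i(x,t_{n+1};s),s\bigr)\,ds \;=\; \mathcal{J}_i^n(x),
\]
which is exactly the claimed identity.

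Since the authors already note that the lemma is essentially a repackaging of (\ref{eq:intchar3}) in the newly introduced notation, I do not expect a genuine obstacle. The only two points that require a brief remark are: (i) the final--value formulation (\ref{eq:intchar2}) still produces a classical solution on $[t_n,t_{n+1}]$, so Proposition~\ref{curves}, although phrased for the forward flow $\gamma_i(s)$, applies verbatim because the relevant ODE $d\gamma_i/dt=\lambda_i(\gamma_i,t)$ is the same; (ii) the foot $g_i^n(x)$ may fall outside $[a,b]$, in which case $V_i^n\bigl(g_i^n(x)\bigr)$ is interpreted via the periodic extension, which is consistent with the periodic boundary conditions imposed on (\ref{eq:orig1})--(\ref{eq:orig2}). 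With these conventions in place the proof reduces to the one-line computation above.
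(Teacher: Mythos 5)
Your proof is correct and follows essentially the same route as the paper: the authors likewise obtain the identity by applying Proposition~\ref{curves} to the curve $\gamma_i(x,t_{n+1};t)$ on $[t_n,t_{n+1}]$, using $\gamma_i(x,t_{n+1};t_{n+1})=x$ and $g_i^n(x)=\gamma_i(x,t_{n+1};t_n)$, and reading off (\ref{eq:intchar3}) in the compact notation. Your added remarks on the final--value formulation and the periodic extension of the foot are consistent with the paper's conventions and do not change the argument.
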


To define the quadrature rule $\mathcal{Q}_i^n$ ( and hence $\tilde{\mathcal{Q}}_i^n$ ), we recall that $\lambda_i$ is a function of the characteristic variables $V_1$ and $V_2$.  For example, for the blood flow system (\ref{eq:model})--(\ref{eq:state}), combining (\ref{eq:Eigen01})--(\ref{eq:Eigen02}) and (\ref{eq:v1v2_1})--(\ref{eq:v1v2_2}), one has,
\begin{align}
\label{eq:eigs1}
\lambda_1(x,t) = \frac{5}{8}V_1(x,t) + \frac{3}{8}V_2(x,t) + c_{0}(x) \quad \text{and} \quad
\lambda_2(x,t) = \frac{3}{8}V_1(x,t) + \frac{5}{8}V_2(x,t) - c_{0}(x),
\end{align}
so in accordance with our previous notation, we can write $\lambda_i(x,t_n) = \lambda_i(V_1^n(x),V_2^n(x),x,t_{n})$.  In turn, we would like to approximate the integral by the simplest ``rectangle rule'', i.e.
\begin{equation}
\mathcal{I}_i^n(x) \approx \Delta t \lambda_i(\gamma_i(x,t_{n+1};t_n),t_n) = \Delta t \lambda_i(V_1^n(g_i^n(x)),V_2^n(g_i^n(x)),g^{n}_{i}(x),t_{n}).
\end{equation}
Let us define $\mathcal{Q}_{i,R}$ and $\tilde{\mathcal{Q}}_{i,R}$ via the rectangle rule approximation:
\begin{align}
\mathcal{Q}_{i,R}^n(x)&:= \Delta t \lambda_i(V_1^n(g_i^n(x)),V_2^n(g_i^n(x)),g^{n}_{i}(x),t_{n}) \\
\tilde{\mathcal{Q}}_{i,R}^n(x)&:= \Delta t \lambda_i(W_1^n(\tilde{g}_i^n(x)),W_2^n(\tilde{g}_i^n(x)),\tilde{g}^{n}_{i}(x),t_{n}),
\end{align}
where $\tilde{\mathcal{Q}}_{i,R}^n$ is computed with the approximate solution $W_1^n$, $W_2^n$.  

\begin{myrem}
If we were to take our pseudo--quadrature rule to be $\mathcal{Q}_i^n = \mathcal{Q}^n_{i,R}$ and $\tilde{\mathcal{Q}}^n_i = \tilde{\mathcal{Q}}^n_{i,R}$, then the formula to determine $\tilde{g}_i^n(x)$ becomes nonlinear and hence \textit{implicit} in time, i.e.
\begin{equation*}
\tilde{g}_i^n(x) = x - \Delta t \lambda_i(W_1^n(\tilde{g}_i^n(x)),W_2^n(\tilde{g}_i^n(x)),\tilde{g}^{n}_{i}(x),t_{n}):=\mathcal{K}_i^n(\tilde{g}_i^n(x)).
\end{equation*}
For small enough $\Delta t$, $\mathcal{K}^n$ is a contraction.  If the rectangle rule scheme is employed, $\tilde{g}^n(x)$ may be computed as the limit of the sequence $y^{(k+1)} = \mathcal{K}^n(y^{(k)})$ with initial condition $y^{(0)}=x$.
\end{myrem}

To simplify the method and have an explicit time stepping procedure, we define the rule we implement from the rectangle rule by replacing both $g_i^n(x)$ and $\tilde{g}_i^n(x)$ with $x$ in both $\mathcal{Q}^n_{i,R}$ and $\tilde{\mathcal{Q}}^n_{i,R}$ respectively. 

Similarly, the source term $R_{i}$ may be a function of the characteristic variables $V_{1}$ and $V_{2}$ so that $R_{i}(x,t) = R_{i}(V_{1}(x,t),V_{2}(x,t),x,t)$. We approximate the exact integral $\mathcal{J}_{i}^{n}$ using a similar explicit quadrature rule denoted by $\tilde{\mathcal{R}}_{i}^n$. More precisely, we have the following definition:

\begin{mydef} \label{Def.PseudoQuad}
The pseudo--quadrature rules applied to the exact and approximate solutions are defined as follows:
\begin{align*}
&\mathcal{Q}_{i}^n(x) := \Delta t \lambda_i(V_1^n(x),V_2^n(x),x,t_{n}) \quad \text{and} \quad \mathcal{R}_{i}^n(x) := \Delta t R_{i}(V_1^n(\tilde{g}^{n}_{i}(x) ),V_2^n(\tilde{g}^{n}_{i}(x)),\tilde{g}^{n}_{i}(x),t_{n})
, \\
&\tilde{\mathcal{Q}}_{i}^n(x) := \Delta t \lambda_i(W_1^n(x),W_2^n(x),x,t_{n}) \quad \text{and} \quad
 \tilde{\mathcal{R}}_{i}^n(x) := \Delta t R_{i}(W_1^n(\tilde{g}^{n}_{i}(x)),W_2^n(\tilde{g}^{n}_{i}(x)),\tilde{g}^{n}_{i}(x),t_{n}).
\end{align*}
\end{mydef}

\noindent
The last missing piece is the specification of the spatial interpolation procedure.
\begin{mydef}
$\Pi_h:\mathcal{C}[a,b] \rightarrow \mathcal{C}_{h}[a,b]$ projects a continuous function $f$ into its piecewise linear interpolant $\Pi_h f$ at the points in $G_h$. 
\end{mydef}
\noindent
The algorithm follows below.
\noindent
\begin{algorithm}
\begin{algorithmic}[]
\State {\bf Input:} $V_1^0, V_2^0 \in \mathcal{C}[a,b]$.
\State Initialize $W_1^0 = \Pi_{h}[V_1^0]$ and $W_2^0 = \Pi_{h}[V_2^0]$.
\State {\tt for} $n = 1, 2, \ldots N$
\State \hspace{1cm}$\tilde{g}_{i}^{n-1}(x) = x - \tilde{\mathcal{Q}}_{i}^{n-1}(x)$ \hspace{1cm} $i=1,2$
\State \hspace{1cm}$W_{i}^n(x) = \Pi_{h} [W_{i}^{n-1}(\tilde{g}_{i}^{n-1}(x)) + \tilde{\mathcal{R}}_{i}^{n-1}(x) ]$ \hspace{1cm} $i=1,2$
\State {\tt end}
\end{algorithmic}
\caption{NMC algorithm for system (\ref{eq:orig1})-(\ref{eq:orig2})}
\label{alg:NMC}
\end{algorithm}
\noindent

\begin{myrem}
Higher order interpolation and quadrature is possible.  We work with piecewise linear interpolation for our analysis since the norm of $\Pi_h$ is uniformly bounded by $1$ for all $h$ which leads to stability. Also, the rule defined in Definition \ref{Def.PseudoQuad} allows our method to remain explicit in time.
\end{myrem}

\begin{myrem}
In practice, we compute the approximate solution $W_i$ at the points in $G_h$, but in the presentation of the algorithm above, the approximate solution is viewed equivalently as a piecewise linear function in $\mathcal{C}_h[a,b]$.  We use this presentation since we work with the continuous supremum norm for our analysis.    
\end{myrem}

%%%%%%%%%%%%%%%%%%%%%%%%%%%%%%%%%%%%%%%%%%%%%%%%%%%%%%%
%%%%%%%%% N E W -- S E C T I O N %%%%%%%%%%%%%%%%%%%%%%
%%%%%%%%%%%%%%%%%%%%%%%%%%%%%%%%%%%%%%%%%%%%%%%%%%%%%%%

\section{Numerical Analysis} \label{Section:Analysis}

Let $V_1(x,t^n)$, $V_2(x,t^n)$ and $W_1^n(x)$, $W_2^n(x)$ be the exact and approximate solutions to (\ref{eq:orig1}) -- (\ref{eq:orig2}) respectively.  We make the following assumptions:
\begin{myass}
\label{ass:V}
The exact solutions satisfy $V_i \in \mathcal{C}^2([0,T]\times[a,b])$.
\end{myass}

\begin{myass}
\label{ass:lam}
The eigenvalues $\lambda_i = \lambda_i(V_{1},V_{2},x,t)$ are continuously differentiable. Also, there are positive constants $\delta$ and $K=K(\delta)$ so that in the domain $\left(|V_1| + |V_2|\right)<\delta$ the source functions $R_{i}=R_{i}(V_1,V_2,x,t)$ are continuously differentiable and satisfy $|R_{i}(V_1,V_2)| \leq K \left( | V_1 | + | V_2 | \right)$. 
\end{myass}

\noindent
Note that Assumption \ref{ass:lam} regarding $\lambda_{i}$ holds for the blood flow system (\ref{eq:model})--(\ref{eq:state}) because the eigenvalues $\lambda_i$ are affine functions of the characteristic variables, as verified in (\ref{eq:eigs1}). Assumption \ref{ass:lam} concerning $R_{i}$ is satisfied if the cross-sectional area $A(x,t)$ is bounded away from zero uniformly in space and time, which is guaranteed when $\| V_{1} \| + \| V_{2} \|$ is sufficiently small. In turn, we need our numerical solution $(W_{1}^n,W_{2}^n)$ to satisfy the same property up to some finite time $T$ so that $R_{i}$ remains sufficiently smooth along the trajectory of the numerical solution. This is ensured by the following proposition.

\begin{myprop}[Stability] \label{Prop.Stability}
Under Assumption \ref{ass:lam}, if $\left( \| W_{1}^{0} \| + \| W_{2}^{0} \| \right) < \delta e^{-2 K T}$, then 
\begin{equation*}
\| W_{1}^{n} \| + \| W_{2}^{n} \| \leq  e^{2 K T} \left(  \| W_{1}^{0} \| + \| W_{2}^{0} \| \right) < \delta, \quad n=1,...,N.
\end{equation*}
\end{myprop}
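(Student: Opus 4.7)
The plan is a straightforward induction on $n$ coupled with a discrete Grönwall estimate. Let $M_n := \|W_1^n\| + \|W_2^n\|$. The inductive statement I would carry forward is $M_n \leq e^{2Kt_n} M_0$, which, combined with the hypothesis $M_0 < \delta e^{-2KT}$, immediately implies $M_n < \delta$ and thus keeps Assumption \ref{ass:lam} applicable at the next step. The base case $n=0$ is trivial.

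For the inductive step, I would start from the update rule in Algorithm \ref{alg:NMC},
\begin{equation*}
W_i^n(x) = \Pi_h\bigl[W_i^{n-1}(\tilde{g}_i^{n-1}(x)) + \tilde{\mathcal{R}}_i^{n-1}(x)\bigr],
\end{equation*}
and exploit two non-expansive facts. First, piecewise linear interpolation produces a convex combination of nodal values, so $\|\Pi_h f\| \leq \|f\|$; this is exactly the property the authors highlight in their remark after the algorithm. Second, evaluation along $\tilde{g}_i^{n-1}$ (periodically extended to cover excursions outside $[a,b]$) cannot increase the sup norm, so $\|W_i^{n-1}(\tilde{g}_i^{n-1}(\cdot))\| \leq \|W_i^{n-1}\|$. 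Invoking the induction hypothesis $M_{n-1} < \delta$, Assumption \ref{ass:lam} gives the pointwise estimate $|R_i| \leq K(|V_1|+|V_2|)$ along the trajectory, so
\begin{equation*}
\|W_i^n\| \leq \|W_i^{n-1}\| + \Delta t\,K\,M_{n-1}.
\end{equation*}
Summing over $i=1,2$ yields $M_n \leq (1+2K\Delta t)M_{n-1}$. Iterating and using $(1+2K\Delta t)^n \leq e^{2Kn\Delta t} = e^{2Kt_n} \leq e^{2KT}$ closes both the quantitative estimate $M_n \leq e^{2KT} M_0$ and the qualitative bound $M_n < \delta$.

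The main conceptual hurdle, and the reason this is not a one-line argument, is the bootstrapping: the source estimate $|R_i| \leq K(|V_1|+|V_2|)$ is only valid inside the ball $\{|V_1|+|V_2|<\delta\}$, so I can apply it at step $n-1$ only after I already know $M_{n-1} < \delta$. The hypothesis on $M_0$ is calibrated with the deflation factor $e^{-2KT}$ precisely to leave enough headroom for the worst-case multiplicative amplification $(1+2K\Delta t)^N \leq e^{2KT}$ without ever exiting the admissible region. Apart from this, the proof uses only two elementary ingredients—the sup-norm contraction of $\Pi_h$ and the Lipschitz-type bound on $R_i$—so the main drafting task is to write these clean inequalities in the correct order and verify that the periodic extension of $\tilde{g}_i^{n-1}$ genuinely preserves the sup-norm bound on $W_i^{n-1}$.
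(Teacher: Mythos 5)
Your proposal is correct and follows essentially the same route as the paper's proof: induction using $\|\Pi_h\|=1$, the sup-norm invariance under composition with $\tilde{g}_i^{n-1}$, the bound $|R_i|\leq K(|V_1|+|V_2|)$ valid inside the $\delta$-ball, the per-step factor $(1+2K\Delta t)$, and the estimate $(1+2K\Delta t)^n\leq e^{2KT}$. The only cosmetic difference is that you carry the sharper inductive invariant $M_n\leq e^{2Kt_n}M_0$ while the paper carries $M_m<\delta$ for all $m\leq n-1$ and then unrolls the recursion at the end; these are interchangeable.
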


\begin{proof}
We rely on the fact that for piecewise linear interpolation we have $\|\Pi_h\| = 1$. We proceed by induction. Assume that
\begin{align*}
\| W_{1}^{m} \| + \| W_{2}^{m} \|  < \delta, \quad \text{for all $m=0,...,n-1$,}
\end{align*}
and consider the following inequality,
\begin{align*}
& \| W_{1}^{n} \|  \leq  \| W_{1}^{n-1}(\tilde{g}_{1}^{n-1}) + \Delta t R_{1}(W_{1}^{n-1}(\tilde{g}_{1}^{n-1}),W_{2}^{n-1}(\tilde{g}_{1}^{n-1}) ) \| \leq \| W_{1}^{n-1} \| + \Delta t K \left( \|W_{1}^{n-1}\| + \| W_{2}^{n-1} \| \right), \\
& \| W_{2}^{n} \|  \leq  \| W_{2}^{n-1}(\tilde{g}_{2}^{n-1}) + \Delta t R_{2}(W_{1}^{n-1}(\tilde{g}_{2}^{n-1}),W_{2}^{n-1}(\tilde{g}_{2}^{n-1}) ) \| \leq \| W_{2}^{n-1} \| + \Delta t K \left( \|W_{1}^{n-1}\| + \| W_{2}^{n-1} \| \right).
\end{align*}
Therefore, 
\begin{align*}
\| W_{1}^{n} \| + \| W_{2}^{n} \| \leq 
\left(1 + 2 K \Delta t \right) \left( \|W_{1}^{n-1} \| + \| W_{2}^{n-1} \| \right) \leq e^{2 K T} \left( \|W_{1}^{0} \| + \| W_{2}^{0} \| \right) < \delta,
\end{align*}
where the second inequality follows by recursion and the strong inductive hypothesis. The last inequality follows from the assumption on the initial condition. This concludes the proof.
\end{proof}

\begin{myrem}
We wish to comment on the physical meaning of Assumption \ref{ass:lam}. When the characteristics variables $(V_{1},V_{2})$ are sufficiently small, the cross-sectional area $A$ is positive and the velocity $u$ remains bounded. This prevents the solution from going into the vacuum state corresponding to $A = 0$, i.e. vessel collapse. Further, a sufficiently small constant $\delta$ in Assumption \ref{ass:lam} can be estimated from the unperturbed wave speed $c_{0}$ as $\delta < 8  \inf_{x} c_{0}(x)$.
\end{myrem}

A convergence result for the algorithm follows below.
\begin{mythm}[Convergence] \label{Thm.Convergence}
Fix $T>0$ and $\Delta t = T/N$ for $N \in \mathbb{N}$. Under Assumptions \ref{ass:V} and \ref{ass:lam}, and the hypothesis from Proposition \ref{Prop.Stability} on the initial condition $(W_{1}^{0},W_{2}^{0})$, the following convergence bound holds:
\begin{equation*}
  \|W_1^n -V_1^n\| + \|W_2^n - V_2^n\| \leq T \exp(C T) \big[\mathcal{O}(h^2 /\Delta t) + \mathcal{O}(\Delta t) \big] \qquad \text{for all $n=1,2,...,N$,}
\end{equation*}
for some positive constant $C = C(V_{1},V_{2})$.
\end{mythm}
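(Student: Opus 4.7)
The plan is to obtain a one-step error recursion of the form $E^{n+1}\leq(1+C\Delta t)E^n + C(h^2+\Delta t^2)$ for the combined sup-norm error $E^n := \|W_1^n-V_1^n\| + \|W_2^n-V_2^n\|$, and then close by discrete Gronwall together with $n\Delta t \leq T$. Subtracting the exact propagation identity of Lemma \ref{lem:propa}, $V_i^{n+1}(x) = V_i^n(g_i^n(x)) + \mathcal{J}_i^n(x)$, from the update $W_i^{n+1}(x) = \Pi_h\bigl[W_i^n(\tilde g_i^n(x)) + \tilde{\mathcal R}_i^n(x)\bigr]$ and inserting $\pm \Pi_h V_i^{n+1}$, the per-step error splits as
\begin{equation*}
e_i^{n+1} = \Pi_h\bigl[W_i^n(\tilde g_i^n) - V_i^n(g_i^n)\bigr] + \Pi_h\bigl[\tilde{\mathcal R}_i^n - \mathcal J_i^n\bigr] + (\Pi_h - I)V_i^{n+1},
\end{equation*}
into a transport piece, a source-quadrature piece, and a spatial interpolation residual. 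Because $\|\Pi_h\|=1$ for piecewise linear interpolation, the outer $\Pi_h$ factors are harmless in the sup norm.

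Each piece is then estimated. Assumption \ref{ass:V} gives $V_i\in C^2$, so the standard piecewise-linear interpolation bound yields $\|(\Pi_h - I)V_i^{n+1}\| \leq Ch^2$. For the transport piece, writing
\begin{equation*}
W_i^n(\tilde g_i^n(x)) - V_i^n(g_i^n(x)) = (W_i^n - V_i^n)(\tilde g_i^n(x)) + \bigl[V_i^n(\tilde g_i^n(x)) - V_i^n(g_i^n(x))\bigr]
\end{equation*}
bounds this term by $\|e_i^n\| + \|(V_i^n)'\|\,|\tilde g_i^n(x) - g_i^n(x)|$. The foot-displacement $|\tilde g_i^n(x) - g_i^n(x)| = |\tilde{\mathcal Q}_i^n(x) - \mathcal I_i^n(x)|$ is controlled by inserting the reference rule $\mathcal Q_i^n(x) = \Delta t\,\lambda_i(V_1^n(x), V_2^n(x), x, t_n)$: Lipschitz continuity of $\lambda_i$ in $(V_1,V_2)$ (Assumption \ref{ass:lam}) yields $|\tilde{\mathcal Q}_i^n - \mathcal Q_i^n| \leq C\Delta t\, E^n$, while a Taylor expansion along the characteristic gives the consistency bound $|\mathcal Q_i^n - \mathcal I_i^n| \leq C\Delta t^2$. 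The source piece is treated by the identical recipe, invoking Lipschitz continuity of $R_i$ in the stability regime secured by Proposition \ref{Prop.Stability}. Summing over $i=1,2$ assembles the promised recursion, which together with the initial bound $E^0 \leq Ch^2$ (from $W_i^0 = \Pi_h V_i^0$) and discrete Gronwall yields $E^n \leq \exp(Cn\Delta t)\bigl(E^0 + nC(h^2 + \Delta t^2)\bigr)$; using $n\leq T/\Delta t$ gives the stated rate.

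The main obstacle I anticipate is obtaining the second-order consistency estimate $|\mathcal Q_i^n(x) - \mathcal I_i^n(x)| = O(\Delta t^2)$. The pseudo-quadrature evaluates $\lambda_i$ at the \emph{head} $(x,t_n)$ rather than at the geometrically natural foot $(g_i^n(x),t_n)$, so one must simultaneously control the rectangle-rule truncation along the characteristic \emph{and} the $O(\Delta t)$ discrepancy $\lambda_i(\cdot,x,t_n) - \lambda_i(\cdot,g_i^n(x),t_n)$. This latter effect gains an additional power of $\Delta t$ only because the overall quadrature carries a $\Delta t$ prefactor; without this cancellation one would only see $O(1)$ convergence in time and the $O(\Delta t)$ temporal rate claimed in the theorem would be lost. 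A parallel subtlety arises when bounding $\tilde{\mathcal R}_i^n - \mathcal J_i^n$, but the same head-versus-foot cancellation resolves it.
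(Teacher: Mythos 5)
Your proposal is correct and follows essentially the same route as the paper: the same decomposition into a transport piece, a source--quadrature piece, and an interpolation residual; the same use of $\|\Pi_h\|=1$; the same control of the foot displacement $|\tilde g_i^n - g_i^n|$ by inserting intermediate quadrature rules and combining Lipschitz continuity of $\lambda_i$ with an $\mathcal{O}(\Delta t^2)$ consistency estimate; and the same discrete Gronwall closure with $n\Delta t \le T$ producing the $\mathcal{O}(h^2/\Delta t)+\mathcal{O}(\Delta t)$ rate. The only (cosmetic) difference is that you compare $\tilde{\mathcal Q}_i^n$ directly to the head-evaluated rule $\mathcal Q_i^n$ and absorb the head-versus-foot discrepancy into a single consistency term, whereas the paper passes through the foot-evaluated rectangle rule $\mathcal Q_{i,R}^n$ first---the head-versus-foot subtlety you flag is resolved exactly as you anticipate, via the extra $\Delta t$ prefactor.
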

\begin{proof}
We first bound $\|W_1^n - V_1^n\|$.  One has $\| W_1^n - V_1^n \| \leq \| W_1^n -  \Pi_h V_1^{n}\| + \| \Pi_h V_1^{n} - V_1^n \|$.  We apply Lemma \ref{lem:propa} to plug in $V_1^n = V_1^{n-1}(g_1^{n-1}) + \mathcal{J}_{1}^{n-1}$, use $\| \Pi_{h} \| = 1$, and then bound the first term as follows.
\begin{align*}
\| W_1^n - \Pi_h V_1^n\| &\leq \| W_1^{n-1}(\tilde{g}_1^{n-1}) - V_1^{n-1}( \tilde{g}_1^{n-1}) \| + \| V_1^{n-1}(\tilde{g}_1^{n-1}) - V_1^{n-1}(g_1^{n-1}) \| \\
& \quad + \| \tilde{\mathcal{R}}_{1}^{n-1} - \mathcal{R}_{1}^{n-1} \| + \| \mathcal{R}_{1}^{n-1} - \mathcal{J}_{1}^{n-1} \|
\\
&\leq \| W_1^{n-1} - V_1^{n-1}\| + \|(V_1^{n-1})'\|\|\tilde{g}_1^{n-1} - g_1^{n-1}\| + \| \tilde{\mathcal{R}}_{1}^{n-1} - \mathcal{R}_{1}^{n-1} \| + \| \mathcal{R}_{1}^{n-1} - \mathcal{J}_{1}^{n-1} \|
\end{align*}
To bound $\| \tilde{g}_1^{n-1} - g_1^{n-1} \|$, note that for any $x$, we have 
\begin{align*}
|\tilde{g}_1^{n-1}(x) - g_1^{n-1}(x)| = |\mathcal{I}_1^{n-1}(x) - \tilde{\mathcal{Q}}_1^{n-1}(x)| 
\leq |\mathcal{I}_1^{n-1}(x) - \mathcal{Q}_{1,R}^{n-1}(x)| + | \mathcal{Q}_{1,R}^{n-1}(x) -  \tilde{\mathcal{Q}}^{n-1}_1(x)|.
\end{align*}
The first term is the quadrature error due to the rectangle rule and the second term may be bounded in the following way:
\begin{align*}
|&\mathcal{Q}_{1,R}^{n-1}(x) -  \tilde{\mathcal{Q}}^{n-1}_1(x)| \leq | \mathcal{Q}_{1,R}^{n-1}(x) -  {\mathcal{Q}}^{n-1}_1(x)| + | \mathcal{Q}^{n-1}_1(x) -  \tilde{\mathcal{Q}}^{n-1}_1(x)| \\
&= \Delta t \, |\lambda_1\big(V_1^{n-1}(g_1^{n-1}(x)), V_2^{n-1}(g_1^{n-1}(x))\big) - \lambda_1\big(V_1^{n-1}(x),V_2^{n-1}(x)\big)|\\
 &\quad + \Delta t  \, |\lambda_1(V_1^{n-1}(x),V_2^{n-1}(x)) - \lambda_1( W_1^{n-1}(x), W_2^{n-1}(x))| \\
&\leq \Delta t C_{\lambda} \Big\{ |V_1^{n}(g_1^{n-1}(x)) - V_1^{n-1}(x)|  + |V_2^{n}(g_1^{n-1}(x)) - V_2^{n-1}(x)| \\
 &\quad + |W_1^{n-1}(x) - V_1^{n-1}(x)| + |W_2^{n-1}(x) - V_2^{n-1}(x)|  \Big\} \\
&\leq \Delta t C_{\lambda} \Big\{ \|(V_1^{n-1})'\||g_1^{n-1}(x) - x| + \|(V_2^{n-1})'\||g_1^{n-1}(x) - x| \\
 &\quad + |W_1^{n-1}(x) - V_1^{n-1}(x)| + |W_2^{n-1}(x) - V_2^{n-1}(x)|  \Big\} \\
&\leq \Delta t^2 \|\lambda_1\|_T C_{\lambda} \Big\{ \|(V_1^{n-1})'\| + \|(V_2^{n-1})'\|\Big\}+ \Delta t C_{\lambda}\Big\{|W_1^{n-1}(x) - V_1^{n-1}(x)| +|W_2^{n-1}(x) - V_2^{n-1}(x)| \Big\}.
\end{align*}
With this bound, one has
\begin{align*}
\|\tilde{g}_1^{n-1} - g_1^{n-1}\| &\leq \| \mathcal{I}_1^{n-1} - \mathcal{Q}_{1,R}^{n-1} \| +  \Delta t C_{\lambda} \Big\{\| W_1^{n-1} - V_1^{n-1} \| + \| W_2^{n-1} - V_2^{n-1} \| \Big\} \\
&+ \Delta t^2 \|\lambda_1\|_T C_{\lambda} \Big\{ \|(V_1^{n-1})'\| + \|(V_2^{n-1})'\|\Big\}.
\end{align*}
Now we proceed to bound the term $\| \tilde{\mathcal{R}}_{1}^{n-1} - \mathcal{R}_{1}^{n-1} \|$ as follows. From Assumption \ref{ass:lam}, we get 
\begin{align*}
\| \tilde{\mathcal{R}}_{1}^{n-1} - \mathcal{R}_{1}^{n-1} \| \leq \Delta t \, C_{R} \big\{ \| W_{1}^{n-1} - V_{1}^{n-1} \| + \| W_{2}^{n-1} - V_{2}^{n-1} \| \big\}
\end{align*}
where $C_R$ is a Lipschitz constant working for both $R_1$ and $R_2$. Similarly,
\begin{align*}
\| \mathcal{R}_{1}^{n-1} - \mathcal{J}_{1}^{n-1} \| &\leq \Delta t  C_{R} \big\{ \| V_{1}^{n-1}(\tilde{g}^{n-1}_{1}) - V_{1}^{n-1}(g^{n-1}_{1}) \| + \| V_{2}^{n-1}(\tilde{g}^{n-1}_{1}) - V_{2}^{n-1}(g^{n-1}_{1}) \| \big\}) + \hat{C} \Delta t^2 \\
&\leq \Delta t C_{R} \big\{ \| (V_{1}^{n-1})'\| + \| (V_{2}^{n-1})'\|  \big\} \|\tilde{g}_1^{n-1} - g_1^{n-1}\| + \hat{C} \Delta t^2,
\end{align*}
where the last term is obtained by approximating the integral $\mathcal{J}_{1}^{n-1}$ by the rectangle rule and employing the differentiability of $R_1$ and of the exact solution $V_{i}$.

With Assumption \ref{ass:V}, we choose a constant $\tilde{C}$ that simultaneously bounds the terms involving $C_{R}$, $C_{\lambda}$, $\|\lambda_i\|_T$ and the norm of the first derivative of $V_i^{n}$ for $i = 1, 2$ and $n = 1 \ldots N$. Then we have,
\begin{align*}
\| W_1^n - V_1^n \| &\leq (1+ \Delta t \tilde{C})\| W_1^{n-1} - V_1^{n-1} \| + \Delta t\tilde{C} \|W_2^{n-1} - V_2^{n-1}\| + \|\Pi_{h} V_1^{n} - V_1^{n} \| \\
& \qquad + \tilde{C} (1 + \tilde{C} \Delta t)  \| \mathcal{I}_1^{n-1} - \mathcal{Q}_{1,R}^{n-1} \|+\tilde{C}\Delta t^2.
\end{align*}
The same argument as above provides the bound for the error in the second characteristic variable:
\begin{align*}
\| W_2^n - V_2^n \| &\leq (1+ \Delta t \tilde{C})\| W_2^{n-1} - V_2^{n-1} \| + \Delta t \tilde{C} \|W_1^{n-1} - V_1^{n-1}\| + \|\Pi_{h} V_2^{n} - V_2^{n} \| \\
& \qquad + \tilde{C} (1 + \tilde{C} \Delta t)  \| \mathcal{I}_2^{n-1} - \mathcal{Q}_{2,R}^{n-1} \|+\tilde{C}\Delta t^2.
\end{align*}

Summing the two above inequalities, and possibly increasing $\tilde{C}$, one obtains:
\begin{align*}
\|W_1^n - V_1^n\| + \| W_2^n - V_2^n \| &\leq (1 + \tilde{C} \Delta t )\Big\{\|W_1^{n-1} - V_1^{n-1}\| + \| W_2^{n-1} - V_2^{n-1} \|\Big\} \\
& \quad + \|\Pi_h V_1^n - V_1^n\| +  \|\Pi_{h} V_2^{n} - V_2^{n} \| \\ 
& \quad + \tilde{C}(1 + \tilde{C} \Delta t) \Big\{ \| \mathcal{I}_1^{n-1} - \mathcal{Q}_{1,R}^{n-1} \|+ \| \mathcal{I}_2^{n-1} - \mathcal{Q}_{2,R}^{n-1} \| \Big\}+ \tilde{C}\Delta t^2.
\end{align*}

We apply the same argument to successively bound the terms $\|W_1^{j}- V_1^{j}\| + \|W_2^j - V_2^j\|$ and conclude:
\begin{align*}
\| W_1^n - V_1^n \| + \|W_2^n - V_2^n\| &\leq \sum_{j=0}^n \exp( \tilde{C} \Delta t )^{n-j}\Big\{\| \Pi_{h} V_1^j - V_1^j \| + \| \Pi_{h} V_2^j - V_2^j \| \Big\} \\
& + \sum_{j=0}^{n-1}\exp(\tilde{C} \Delta t )^{n-j}\tilde{C} \Big\{\| \mathcal{I}_1^{j} - \mathcal{Q}_{1,R}^{j} \| + \| \mathcal{I}_2^{j} - \mathcal{Q}_{2,R}^{j} \| \Big\} + \sum_{j=0}^{n-1}  \tilde{C}\exp(\tilde{C} \Delta t)^{j}\Delta t^2  \\
&\leq \frac{T}{\Delta t} \exp(C T)\Big[\max_{i,j} \|\Pi_{h} V_i^j - V_i^j \| + \max_{i,j}\| \mathcal{I}_i^{j} - \mathcal{Q}_{i,R}^{j} \|+ O(\Delta t^2)\Big],
\end{align*}
where $C>0$ is a new constant, large enough such that we can take all the prefactors outside the parentheses.  The maximum is taken over $i = 1, 2$ and $j = 1, \ldots, n$. For the rectangle rule, one can show:
\begin{equation}
\max_j \|\mathcal{I}_i^j - \mathcal{Q}_{i,R}^j\| \leq C_{V} \frac{\Delta t^2}{2}
\end{equation}
where $C_{V}=C_{V}(V_{1},V_{2})$. For piecewise linear interpolation, we have:
\begin{equation}
\max_j \|\Pi_h V_i^j - V_i^j\| \leq \frac{h^2}{8}\|V_i''\|_T.
\end{equation}
With these bounds we obtain the result.
\end{proof}

\begin{myrem}
Practically we take $h$ proportional to $\Delta t$, so the error decreases linearly in both $\Delta t$ and $h$. Notice that neither the Stability Proposition \ref{Prop.Stability} nor the Convergence Theorem \ref{Thm.Convergence} are dependent on the choice for the constant of proportionality. In fact, in order to obtain convergence at a linear rate, it is only needed that $h / \Delta t$ is bounded above. In other words, our proposed method is unconditionally stable with no need to satisfy a CFL--type condition.
\end{myrem}

%%%%%%%%%%%%%%%%%%%%%%%%%%%%%%%%%%%%%%%%%%%%%%%%%%%%%%%
%%%%%%%%% N E W -- S E C T I O N %%%%%%%%%%%%%%%%%%%%%%
%%%%%%%%%%%%%%%%%%%%%%%%%%%%%%%%%%%%%%%%%%%%%%%%%%%%%%%

\section{Transmission Conditions at Branching Points} \label{Section:Trans}

The end goal of the one-dimensional blood flow models is to simulate hemodynamics in a network of one-dimensional vessels representing portions of the circulatory system. These vessels are connected at nodes or branching points where the flow is governed by conservation laws. Various models have been proposed to simulate the branching flows. We refer to \cite[Section 3.1]{FLQ03}. We simply impose conservation of mass and continuity of the total pressure at each interior node of the network.

In general, let $J$ be the number of incoming and outgoing vessels at a given node, and $(A_{j},u_{j})$ the cross-sectional area and flow velocity respectively for each vessel indexed by $j=1,...,J$. Without loss of generality, we assume the 1d coordinates on each vessel to be such that blood flows out of the node for positive values of the velocities $u_{j}$.  Conservation of mass requires that
\begin{align} 
\label{eqn:conservmass}
\sum_{j=1}^{J} A_{j} u_{j} = 0,
\end{align}
whereas continuity of total pressure is enforced by the following equations
\begin{align} 
\label{eqn:contpress}
\frac{1}{2} u_{1}^2 + p_{1} / \rho = \frac{1}{2} u_{j}^2 + p_{j} / \rho, \quad j=2,...,J, 
\end{align}
where $p_{j} = p_{j}(A_{j})$ is defined by (\ref{eq:state}). The goal is to translate these physical conservation laws into the transmission of characteristic variables at the connecting node. Recall that on each branch we have a pair of characteristics, one traveling out of the node and another into the node. We denote them as $W_{+,j}^{n}$ and $W_{-,j}^n$, respectively, where $n$ is the time step to be computed. Since $W_{-,j}^{n}$ travels into the node, then it can be determined explicitly from the information at the $n-1$ time level using the Algorithm \ref{alg:NMC}. Hence, by plugging (\ref{eq:Ch_vs_Phys}) into (\ref{eqn:conservmass})--(\ref{eqn:contpress}), we obtain a nonlinear system of $J$ algebraic equations for the unknowns $W_{+,1}^{n},W_{+,2}^{n},...,W_{+,J}^{n}$ which we solve with Newton's method. This approach constitutes our numerical transmission conditions for the characteristic variables at each node of a network. In our numerical implementation of these transmission conditions, we use $W_{+,1}^{n-1},W_{+,2}^{n-1},...,W_{+,J}^{n-1}$ as the initial guess for Newton's method, and we stop the iterative process when the relative difference between two consecutive iterations falls below a certain tolerance. In the simulations described in the next section, we select the tolerance to be $10^{-8}$ which is much smaller than the expected error introduced by the discretization of the spatial and temporal domains.

%%%%%%%%%%%%%%%%%%%%%%%%%%%%%%%%%%%%%%%%%%%%%%%%%%%%%%%
%%%%%%%%% N E W -- S E C T I O N %%%%%%%%%%%%%%%%%%%%%%
%%%%%%%%%%%%%%%%%%%%%%%%%%%%%%%%%%%%%%%%%%%%%%%%%%%%%%%

\section{Numerical Experiments} \label{Section:NumExp}

\subsection{Convergence rate and unconditional stability} \label{Section:Conv}

We compute the convergence rate of our method by comparing our numerical solution to the exact solution
\begin{align*}
A(x,t) &= \left(1 + t \exp(-10 t) \sin\frac{\pi x}{L}\right)^2\\
u(x,t) &= 0
\end{align*}
with boundary conditions $A = A_0 = 1 \text{ cm$^{2}$}$ on the inlet and outlet. The spatial variable $x \in [0,L]$ for $L= 20 \text{ cm}$. The time variable $t \in [0,T]$ where $T = 1 \text{ sec}$. The characteristic variables $V_1$ and $V_2$ are then derived from (\ref{eq:v1v2_1}) -- (\ref{eq:v1v2_2}). Recall that $V_{1}$ propagates to the right and $V_{2}$ to the left, so we impose a boundary condition for $V_{1}$ at $x=0$ and for $V_{2}$ at $x = L$. Since $A=A_{0}$ at the boundary points $x=0$ and $x=L$, then $c=c_{0}$ at those two points, and the appropriate boundary condition for the numerical variables are obtained from  (\ref{eq:Ch_vs_Phys}) as follows,
\begin{align*}
W_{1}^{n}|_{x=0} = W_{2}^{n}|_{x=0} \quad \text{and} \quad W_{2}^{n}|_{x=L} = W_{1}^{n}|_{x=L} \quad \text{for all $n=1,...,N$},
\end{align*}
where $W_{2}^{n}|_{x=0}$ and $W_{1}^{n}|_{x=L}$ are explicitly given from the previous time step using the Algorithm \ref{alg:NMC}.

Following the test case presented in \cite{MN08}, the parameters are chosen as $\beta = 229674 \text{ dyne}/\text{cm}^3$ and $\nu = 0$. Using the standard approach, we derive the source terms for this exact solution and then compute a numerical approximation with NMC.

To highlight the perfomance of the method beyond the traditional CFL limitation, let us consider the following constant
\begin{equation}
K_{\rm CFL} := \frac{c_{0}\Delta t}{h}, \qquad \text{where} \quad c_{0}^{2} = \frac{\beta}{2 \rho} A_0^{1/2}.
\end{equation}  
Here $c_{0}$ approximates the speed of pressure waves. Explicit methods require $K_{\rm CFL}$ to be bounded (typically less than 1) for stability, but our method requires no such restriction.  In this light, we set $K_{\rm CFL} = 2^n$ to investigate the convergence behavior of the method as $n$ increases. Table \ref{tab:conv} displays relative error in the supremum norm (over space and time) and convergence rate for different values of $K_{\rm CFL}$, and $h = L / 2^{3+m}$ and
$\Delta t = K_{\rm CFL} h / c_{0}$ for $m = 1, \ldots, 6$.

\begin{table}[H]
\centering
\scalebox{0.9}{
\begin{tabular}{l|c c c c c c c}
\hline
\hline
& & & &\textbf{Rel Error} & & & \\
$m$ & $K_{\rm CFL}=1/4$ & $K_{\rm CFL}=1/2$ & $K_{\rm CFL}=1$ & $K_{\rm CFL}=2$ & $K_{\rm CFL}=4$ & $K_{\rm CFL}=8$ & $K_{\rm CFL}=16$ \\
\hline
$1$ & $	1.77 \times 10^{-3}	$ & $	1.25\times 10^{-3}	$ & $	2.73\times 10^{-3}	$ & $	5.53\times 10^{-3}	$ & $	1.10\times 10^{-2}	$ & $	2.07\times 10^{-2}	$ & $	7.22\times 10^{-2}	$ \\
$2$ & $	9.80\times 10^{-4}	$ & $	6.32\times 10^{-4}	$ & $	1.38\times 10^{-3}	$ & $	2.77\times 10^{-3}	$ & $	5.57\times 10^{-3}	$ & $	1.10\times 10^{-2}	$ & $	2.08\times 10^{-2}	$ \\
$3$ & $	5.18\times 10^{-4}	$ & $	3.19\times 10^{-4}	$ & $	6.92\times 10^{-4}	$ & $	1.39\times 10^{-3}	$ & $	2.78\times 10^{-3}	$ & $	5.59\times 10^{-3}	$ & $	1.11\times 10^{-2}	$ \\
$4$ & $	2.66\times 10^{-4}	$ & $	1.60\times 10^{-4}	$ & $	3.47\times 10^{-4}	$ & $	6.95\times 10^{-4}	$ & $	1.39\times 10^{-3}	$ & $	2.78\times 10^{-3}	$ & $	5.59\times 10^{-3}	$ \\
$5$ & $	1.35\times 10^{-4}	$ & $	8.03\times 10^{-5}	$ & $	1.74\times 10^{-4}	$ & $	3.48\times 10^{-4}	$ & $	6.96\times 10^{-4}	$ & $	1.39\times 10^{-3}	$ & $	2.78\times 10^{-3}	$ \\
$6$ & $	6.80\times 10^{-5}	$ & $	4.02\times 10^{-5}	$ & $	8.69\times 10^{-5}	$ & $	1.74\times 10^{-4}	$ & $	3.48\times 10^{-4}	$ & $	6.96\times 10^{-4}	$ & $	1.39\times 10^{-3}	$ \\
\hline
\hline
& & & & \textbf{Conv Rate} & & & \\
$2$ & $0.85$ & $0.98$ & $0.99$ & $1.00$ & $0.98$ & $0.91$ & $1.80$ \\
$3$ & $0.92$ & $0.99$ & $0.99$ & $1.00$ & $1.00$ & $0.98$ & $0.91$ \\
$4$ & $0.96$ & $0.99$ & $1.00$ & $1.00$ & $1.00$ & $1.01$ & $0.98$ \\
$5$ & $0.98$ & $1.00$ & $1.00$ & $1.00$ & $1.00$ & $1.00$ & $1.01$ \\
$6$ & $0.99$ & $1.00$ & $1.00$ & $1.00$ & $1.00$ & $1.00$ & $1.00$ \\
\hline
\end{tabular}}
\caption{Relative errors in the supremum norm (over space and time) for the NMC. The asymptotic linear rate of convergence proven in Theorem \ref{Thm.Convergence} is observed in these numerical experiments for increasing values of the CFL number $K_{\rm CFL}$. These experiments confirm the unconditional stability of the NMC. \label{tab:conv}}
\end{table}

\subsection{Single uniform vessel} 

In this section, we compare the numerical method of characteristics applied to (\ref{eq:orig1})--(\ref{eq:orig2}) for approximating ($V_1$, $V_2$) to a discontinuous Galerkin (dG) discretization applied to (\ref{eq:model})--(\ref{eq:state}) for approximating ($A$, $u$) (as described by Sherwin et al. \cite{SFPF03}).  The computational domain is a single vessel of length $20$ cm.  The vessel parameters are again derived from the test case presented in \cite{MN08}; $A_0 = 1\text{ cm$^{2}$}$, $\beta = 229674 \text{ dyne}/\text{cm}^3$ and density $\rho = 1.06 \text{ g/cm}^3$. Further, we set the viscosity $\nu = 0$ so that we can attribute any possible diffusion to the numerical method itself. An initial Gaussian pressured pulse in time is prescribed at the left inlet of the vessel with functional form
\begin{equation}
p(t) = \alpha \exp \Big((t-\xi)^2/2\sigma^2\Big). \label{eqn:GaussianProfile}
\end{equation}
The parameters $\alpha = 10^2$ or $10^3$ $\text{ dyne/cm}^2$, $\xi = 0.015 \text{ s}$, $\sigma = 0.003 \text{ s}$ remain the same for each numerical experiment in this section. The procedure for prescribing incoming boundary conditions for the dG method is described in \cite{SFPF03}. For the NMC, from the pressure profile (\ref{eqn:GaussianProfile}), one derives the prescribed area $A$ at the inlet from the state equation (\ref{eq:state}) (or equivalently the local wave speed $c$ from (\ref{eqn:speed})). From (\ref{eq:Ch_vs_Phys}) then we obtain the inlet boundary condition $W_{1}^{n} = W_{2}^{n} + 8 \left( c(t_{n}) - c_{0} \right)$ where $W_{2}^{n}$ is explicitly obtained from the information at the $n-1$ time level using the Algorithm \ref{alg:NMC}. The outlet boundary condition is of absorbing type, that is, the waves are allowed to leave the domain without reflection by setting $W_{2}^{n} = 0$ at $x=L$ for all $n=1,..., N$.

As a metric for comparing the approximate solutions obtained from NMC and dG, define the vectors ${\bf p}_{\rm dG}$ and ${\bf p}_{\rm NMC}$ as the pressures computed from each method with each component corresponding to a pressure value at a point in the NMC grid $G_h$. Then the relative difference is given by $
\| {\bf p}_{\rm dG} - {\bf p}_{\rm NMC}\|_2 / \|{\bf p}_{\rm dG}\|_2$,
where $\|\cdot\|_2$ is the vector two-norm. Figure \ref{fig:1dvessel} displays the approximate solutions to both methods for $\alpha = 10^2$ (no shock) and $\alpha = 10^3$ (shock) respectively.  Visually, they appear to agree well, modulo some small diffusion in the NMC solution. Table \ref{tab:compare_error} displays the relative difference between the dG and NMC solutions at each of the times $t = 0.03$, $0.045$, and $0.06$, and confirms the agreement of the solutions.  The two methods agree less well in capturing the shock, but we note that shock formation is not physiological for normal blood flow.

\begin{table}[H]
\centering
\scalebox{1.0}{
\begin{tabular}{c | c c c}
\hline
\hline
& & $\| {\bf p}_{\rm dG} - {\bf p}_{\rm NMC}\|_2 / \|{\bf p}_{\rm dG}\|_2$ & \\
$\alpha$ &$t = 0.03$ &$t = 0.045$ &$t = 0.06$  \\
\hline
$10^2$& $2.78\times 10^{-3}$&$3.71 \times 10^{-3}$ &$4.95\times10^{-3}$ \\
$10^3$&$1.01\times10^{-2}$ &$3.03\times10^{-2}$ &$8.53\times 10^{-2}$  \\ 
\hline
\end{tabular}}
\caption{Relative difference in dG and NMC solutions for simulations within a single vessel. \label{tab:compare_error}}
\end{table}

\begin{figure}[H]
\begin{center}
\includegraphics[scale=0.45, trim=22 175 70 300]{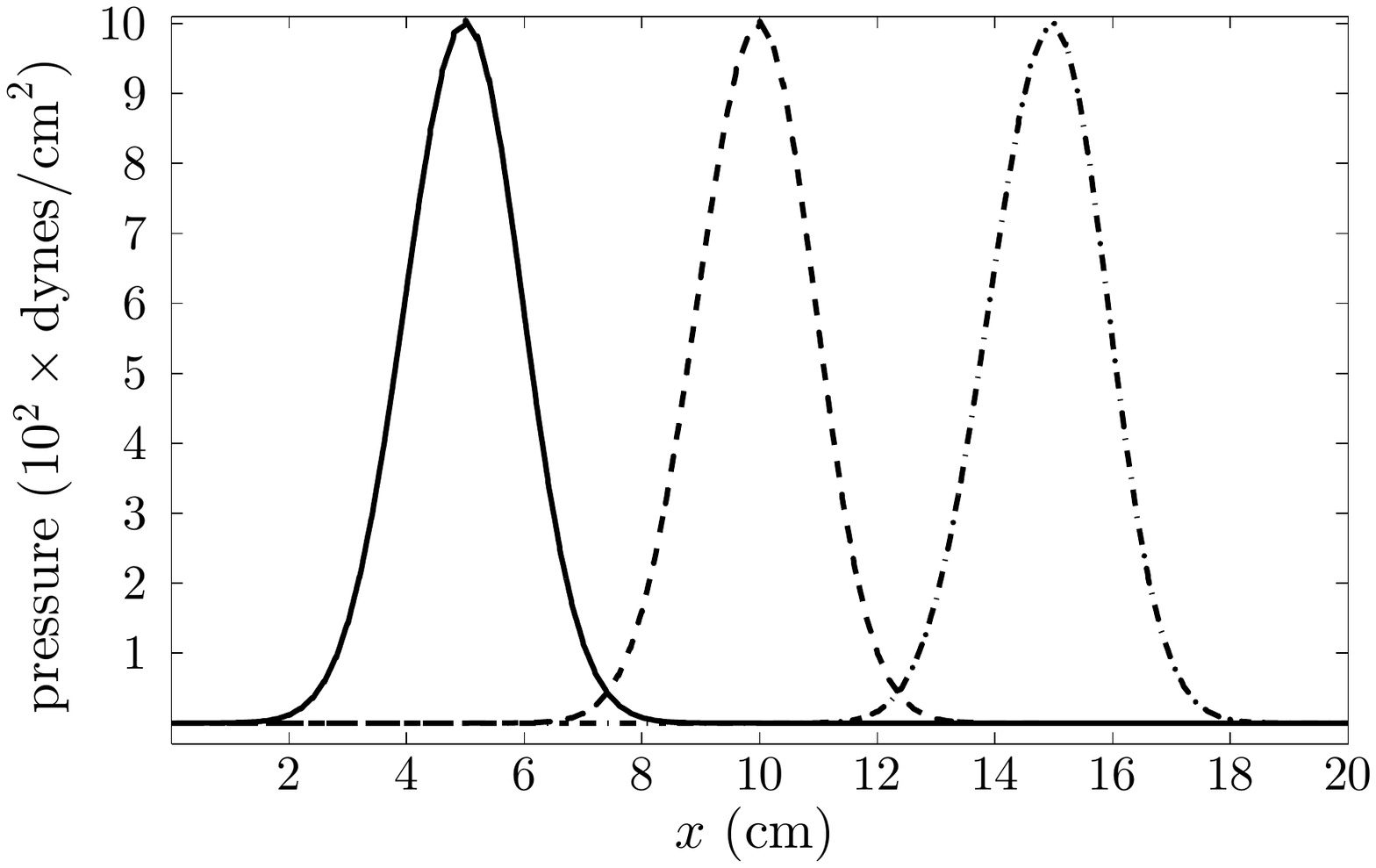}
\put(-130,158){\small dG for $(A,u)$}
\includegraphics[scale=0.45, trim=22 175 70 300]{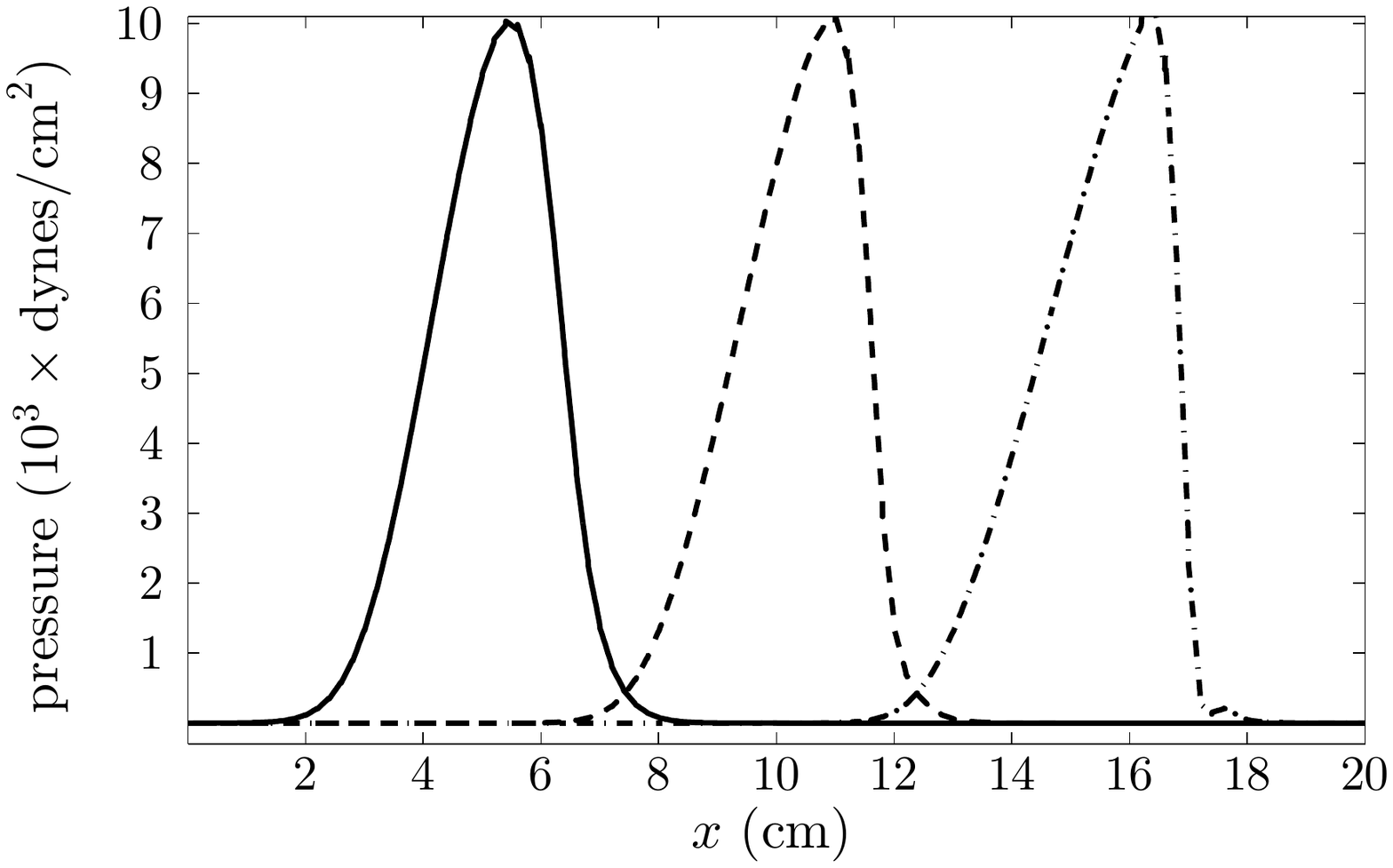}
\put(-130,158){\small dG for $(A,u)$}\\
\vspace{0.1cm}
\includegraphics[scale=0.45, trim=22 215 70 300]{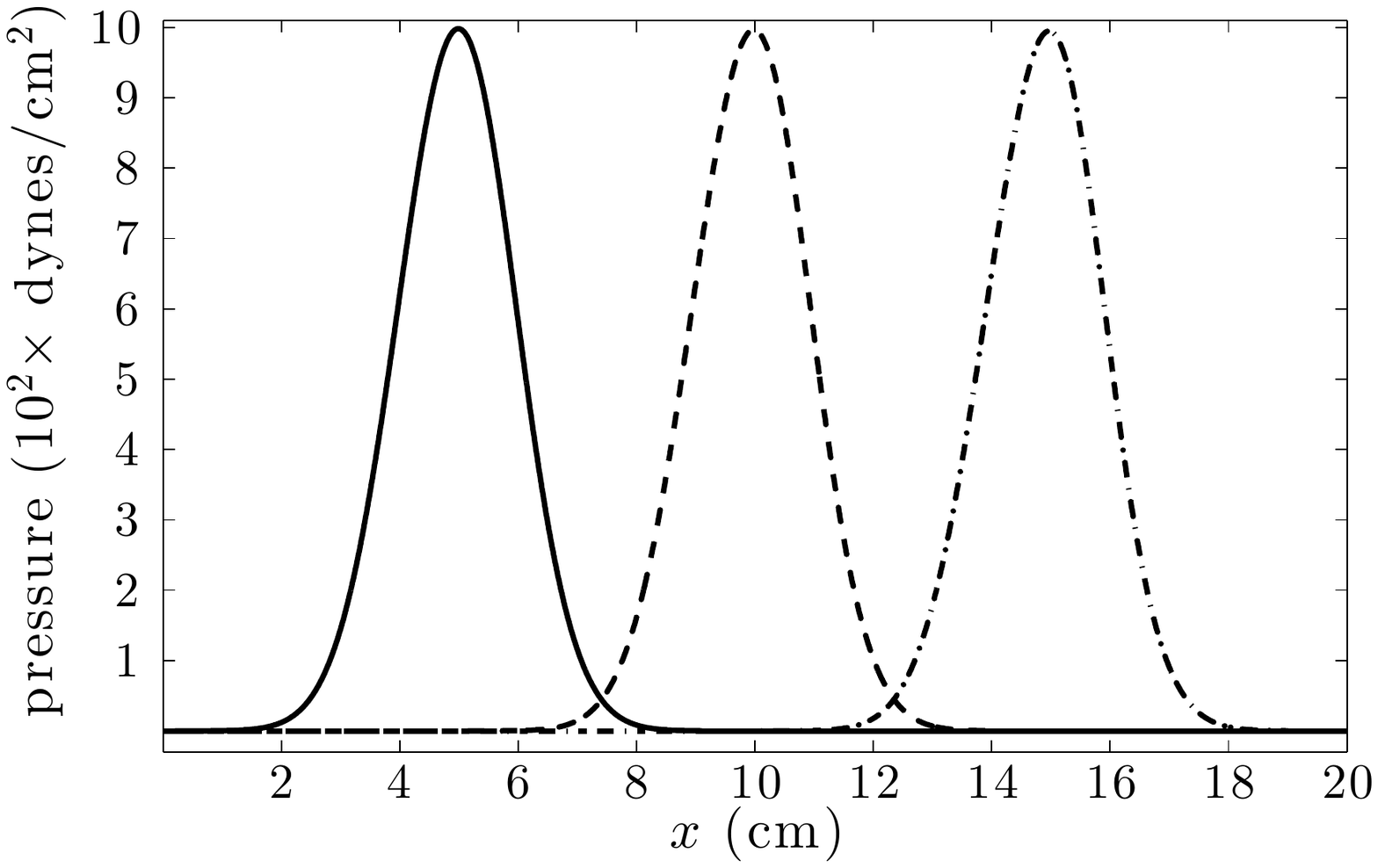}
\put(-136,140){\small NMC for $(V_1,V_2)$}  
\includegraphics[scale=0.45, trim=22 215 70 300]{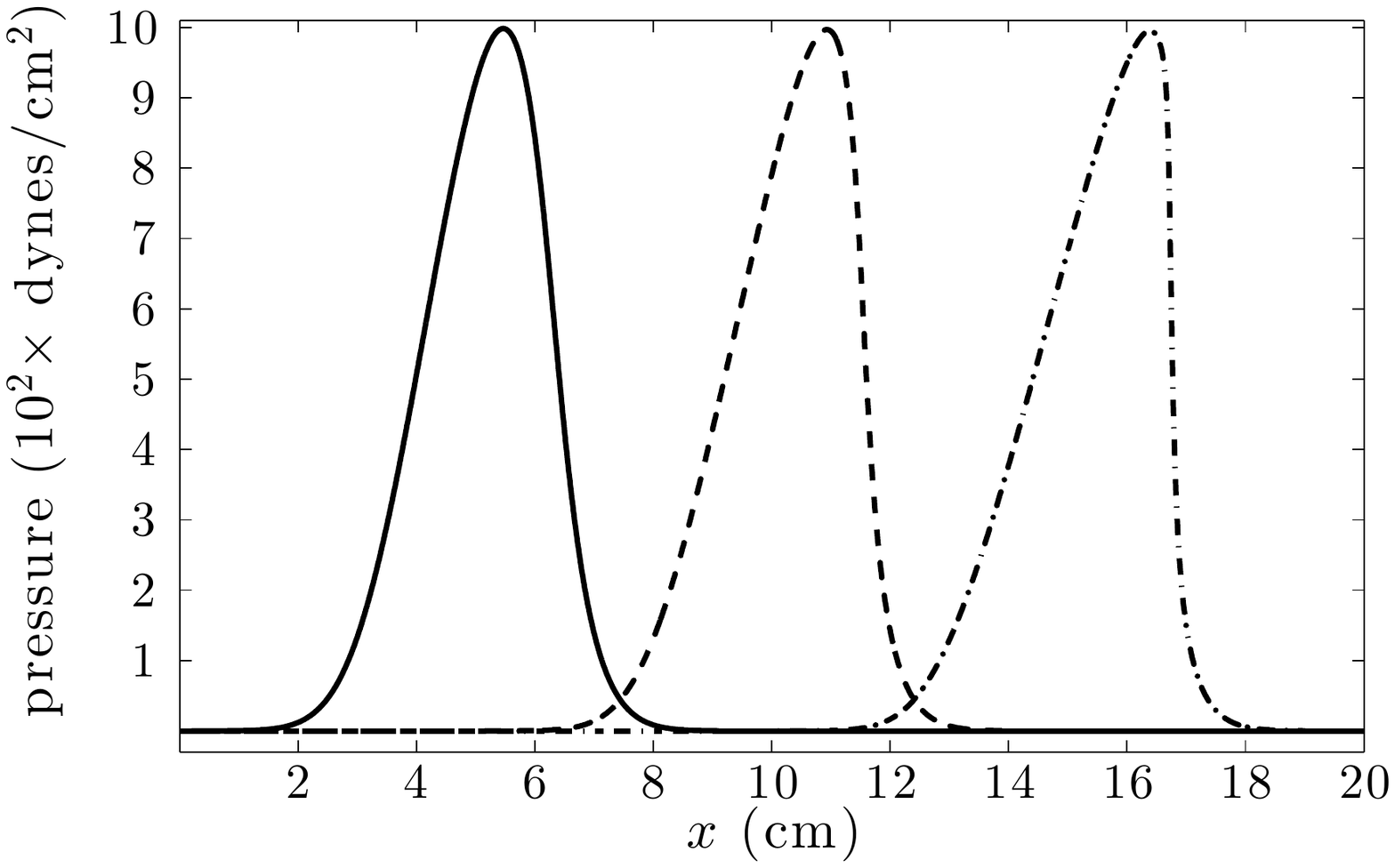}
\put(-136,140){\small NMC for $(V_1,V_2)$}  
\end{center}
\caption{
Propagation of Gaussian pressure pulse in a single uniform vessel, simulated with discontinuous Galerkin (top) and numerical method of characteristics (bottom).  The dG implementation uses $100$ elements with piecewise linear polynomials and $\Delta t = 1 \times 10^{-5}$, while the NMC implementation uses a spatial discretization of $1 \times 10^{-2}$ and $\Delta t = 1 \times 10^{-4}$.  Time increases from the left to right with snapshots taken every 0.03 seconds, i.e. for the solid line, $t = 0.03$, for the dashed line, $t = 0.045$, and for the dotted-dashed line, $t = 0.06$. As expected, the NMC method exhibits  some small numerical dissipation, but agrees very well the with dG simulation.  For the figures on the right, the amplitude of the wave is an order of magnitude larger than in the figures on the left, leading to rapid shock formation within the computational domain.}
\label{fig:1dvessel}
\end{figure}

Lastly, Figure \ref{fig:timing} displays timing results for \textsc{Matlab} implementations of each method applied to the simulation of a pressure pulse in a single vessel.  For both cases, $\Delta t = 1\times 10^{-6}$ and the degrees of freedom ($DOF$) for each method are defined as follows,
\begin{align*}
& DOF_{\rm dG} = \Big\{\text{number of elements}\Big\} \times \Big\{\text{polynomial degree $+$ 1}\Big\}, \\
& DOF_{\rm NMC} = \text{number of points in } G_h.
\end{align*}
We integrate the solution for $20$ timesteps (the final time $T = 2 \times 10^{-5}$ sec.) on a laptop with a $2.5$ GHz Intel Core i5-2520M processor.  The value displayed in Figure \ref{fig:timing} is wall clock time, averaged over 25 realizations, normalized by $T$, and then divided by $DOF$. As expected, both methods are asymptotically linear in $DOF$, with NMC several of orders of magnitude faster than dG.

\begin{figure}[h!]
\begin{center}
\includegraphics[scale=0.60, trim=50 215 50 450]{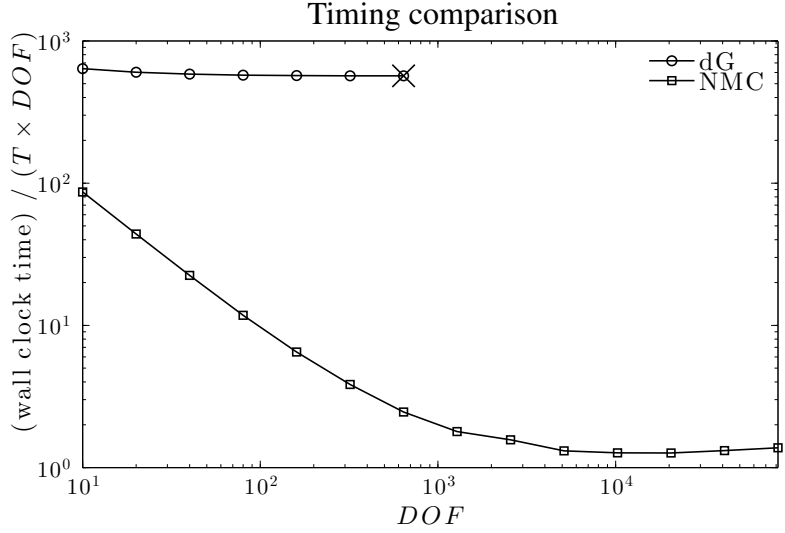}
\put(-200,185){Timing comparison}
\end{center}
\caption{A comparison of the computational time for NMC and dG methods applied to the simulation of a pressure pulse in a single vessel.  Both methods asymptotically scale linearly in $DOF$, with NMC substaintially faster than dG.  The `$\times$' on the dG curve indicates that for this spatial discretization and beyond, we cannot expect the method to be stable for the chosen timestep.}
\label{fig:timing}
\end{figure}

\subsection{Vessel networks}

In this section we demonstrate the utility of the numerical method of characteristics in simulating flow in a network of vessels, each modeled by (\ref{eq:model})--(\ref{eq:state}).

First we set up a small network ($5$ branches and $2$ interior nodes) which represents the large arteries in the left arm.  The parameters are taken from \cite{SFPF03}. We use this small network to compare the NMC with the dG method in the presence of branching points at which we enforce the transmission conditions of Section \ref{Section:Trans}. To validate this proposed transmission conditions for the NMC, we compare the results obtained from the dG method and the NMC applied to this five vessel network. The pressure at the input node and at one of the terminal nodes is displayed in Figure \ref{fig:smalltree}, along with the relative difference between the two numerical solutions. From this figure we observe that both methods compare well since the relative difference is below the $2 \%$ mark. We take into account the blood viscosity whose value is set to $\nu = 3.3 \times 10^{-2} \text{ cm$^2$ / s}$. The spatial and temporal step sizes for the NMC are $h = 1$ cm and $\Delta t = 2.5 \times 10^{-3}$ s, respectively. For the dG method, $h = 1$ cm and $\Delta t = 10^{-4}$ s, and we use piecewise linear polynomials.

\begin{figure}[H]
\begin{center}
\includegraphics[height = 0.4 \textheight]{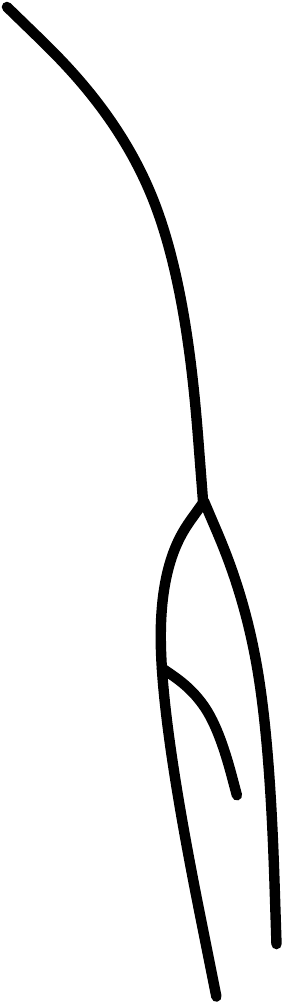}
\hspace{1cm}\includegraphics[height = 0.4 \textheight, trim=50 50 100 50]{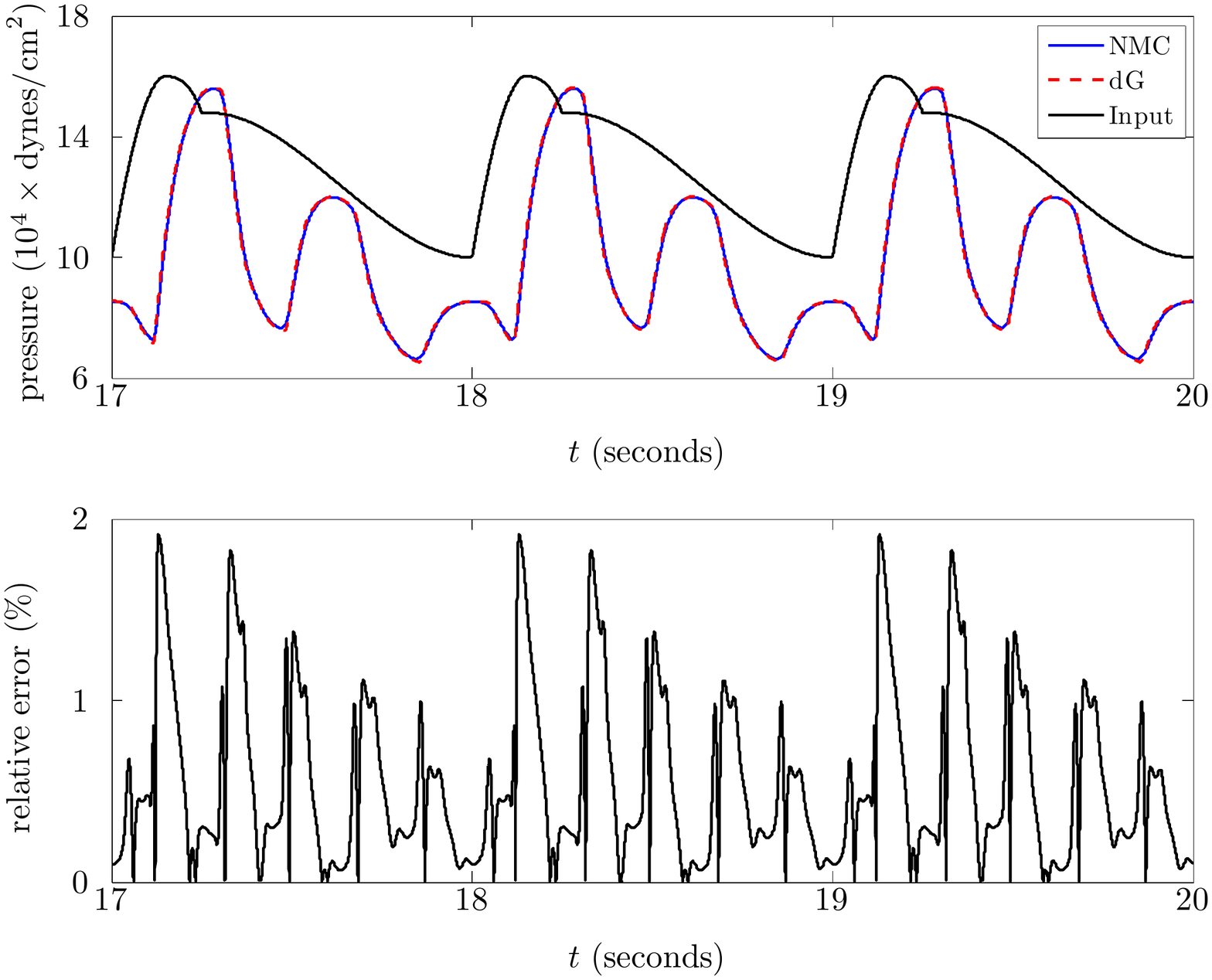}
\end{center}
\caption{Left: A sketch of a small arterial network representing the left arm. The geometric and elastic properties of the vessels are provided in \cite{SFPF03}. Top right: Pressure waveforms at the top node of the network and terminal node of the left radial artery. There is no resistance imposed at the terminal ends. Bottom right: Percent difference in pressure between NMC and dG, relative to the norm of the dG solution, $|p_{\rm dG} - p_{\rm NMC}| / \| p_{\rm dG} \| \times 100$.}
\label{fig:smalltree}
\end{figure}

As a second example, we set up the arterial network from \cite{Myn11} which contains the 64 largest arteries in the human body (we exclude coronary arteries). For sake of simplicity, we do not incorporate the influence of organs, capillary beds or the venous network. There is no resistance imposed at the terminal ends of this arterial model where the pressure waves are allowed to leave the terminal vessels without reflection. We do take into account the blood viscosity by retaining the zeroth order (dissipative) term of the governing system (\ref{eq:1dchar}), where we set $\nu = 3.3 \times 10^{-2} \text{ cm$^2$ / s}$. The length and radius of each arterial segment is obtained from \cite{Myn11}. The elastic coefficient $\beta$ of each segment is given by the following empirical formula,
\begin{equation*}
\beta = \frac{\delta}{r} \frac{E}{(1-\sigma^2) r}, \qquad \text{and} \quad E = \frac{r}{\delta} \left( 6 \times 10^6 e^{- 9 r / \text{ cm} } + 33.7 \times 10^4  \right) \, \text{ dyne } / \text{ cm}^{2}.
\end{equation*}
where $\delta/r = 0.1$ is the ratio of wall thickness $\delta$ to unperturbed cross-sectional radius $r$. The Poisson's ratio is $\sigma = 0.5$, and $E$ is the Young's modulus of elasticity.

Figure \ref{fig:arterialtree} displays the input pressure profile at the Aortic root and the observed pressure at the left Radial artery.
The simulations were carried out with quasi-uniform spatial discretizations parametrized by $h$ and time step $\Delta t_{\rm NMC}$. The parameters $h$ and $\Delta t_{\rm NMC}$ were refined proportionally, but in all three cases $\Delta t_{\rm NMC}$ is sufficient small to appropriately resolve the pressure variations within one cardiac cycle.  The three solid lines in Figure \ref{fig:arterialtree} display the convergence behavior as the spatial and temporal steps are refined.

From the given geometry and elastic properties of this arterial tree, we obtain a pressure wave speed $c_{0}$ varying within the following range $460 - 1300 \text{ cm/sec}$. As a result, for the chosen $h$ and $\Delta t_{\rm NMC}$, we have a maximum CFL number $K_{\rm CFL} \approx 6.5$.
On the other hand, the time step needed to satisfy stability for a piecewise linear explicit dG scheme is known to be
\begin{equation*}
\Delta t_{\rm dG} < \frac{h}{3 \max{c_{0}}} \approx 1.66 \times 10^{-4}.
\end{equation*} 
This implies that $\Delta t_{\rm NMC} = 2 \times 10^{-3}$ sec (the intermediate refinement in Figure \ref{fig:arterialtree}) is about $12$ times larger than $\Delta t_{\rm dG}$. The spatial discretization $h = 1 \text{ cm}$ leads to about $900$ degrees of freedom (DOF) for the NMC method applied to the entire arterial tree. If we consider both the gain in computational speed per DOF (displayed in Figure \ref{fig:timing}) and the larger time step allowed by the unconditional stability of the NMC, then we conclude that the NMC is at least $3$ orders of magnitude more efficient than the dG method for these physiological parameters.

\begin{figure}[H]
\begin{center}
\includegraphics[height = 0.35 \textheight]{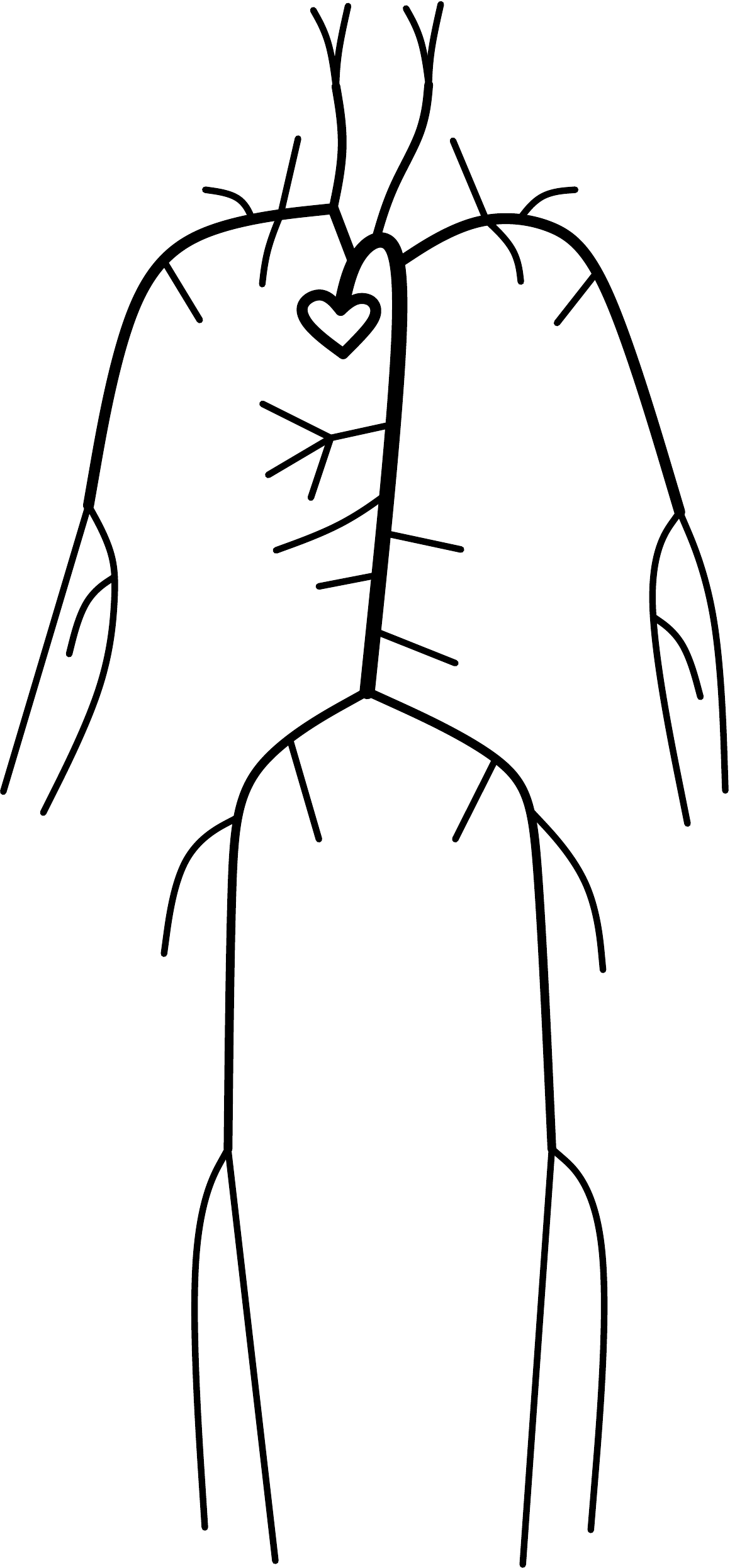}
\hspace{1cm}\includegraphics[height = 0.4 \textheight, trim=80 50 100 100]{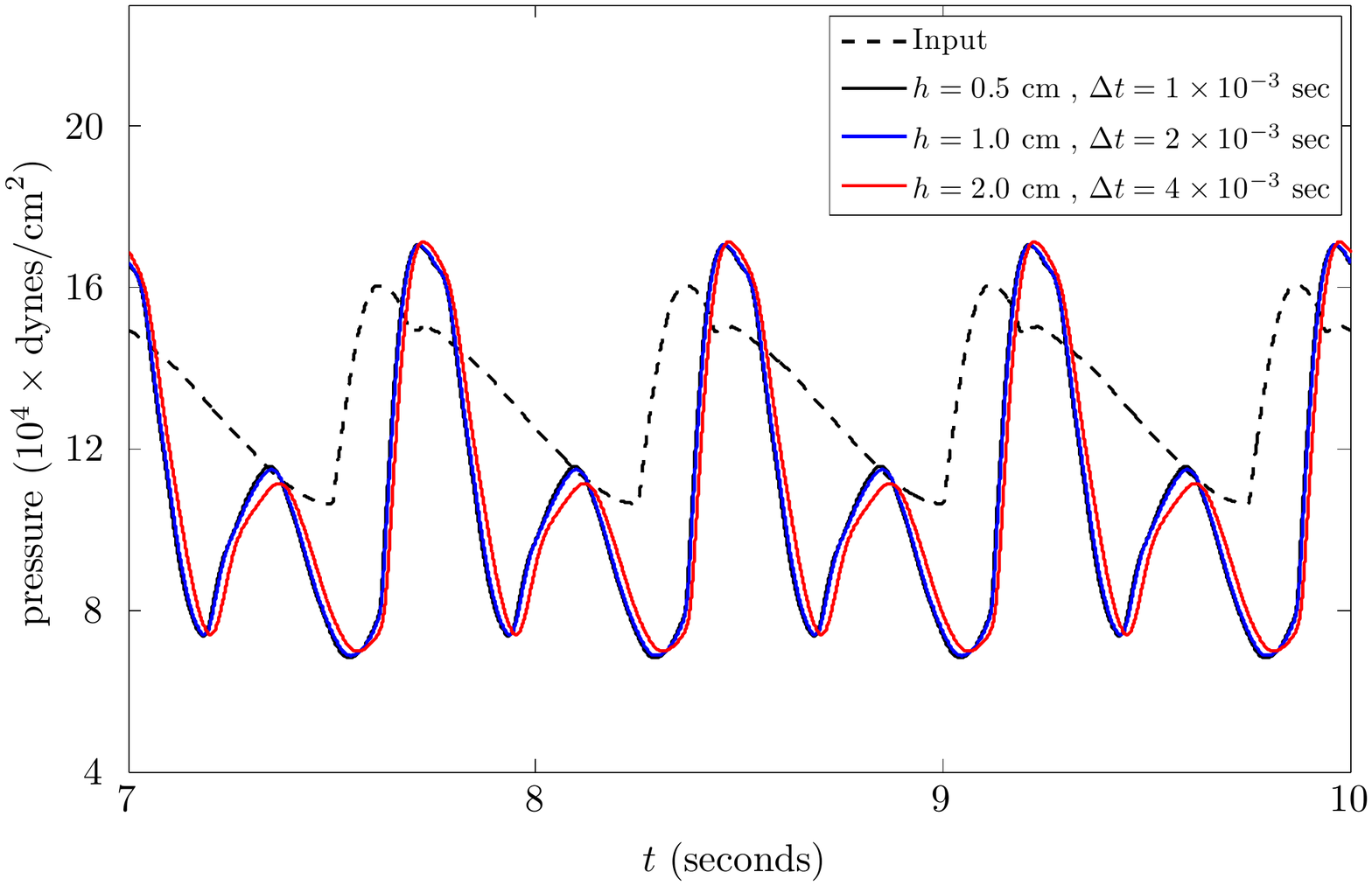}
\end{center}
\caption{Left: A sketch of the systemic arterial network containing 64 segments and 29 interior nodes. The geometric and elastic properties of the vessels are provided in Table F.1 of \cite{Myn11}. Right: Pressure waveform (at the aorta (dashed) and left radial (solid) arteries) obtained from the simulation based on the NMC. There is no resistance imposed at the terminal ends. The three solid lines display the convergence behavior as the spatial and temporal steps are refined.}
\label{fig:arterialtree}
\end{figure}

%%%%%%%%%%%%%%%%%%%%%%%%%%%%%%%%%%%%%%%%%%%%%%%%%%%%%%%
%%%%%%%%% N E W -- S E C T I O N %%%%%%%%%%%%%%%%%%%%%%
%%%%%%%%%%%%%%%%%%%%%%%%%%%%%%%%%%%%%%%%%%%%%%%%%%%%%%%

\section{Conclusion}

In this work, we focused on the numerical approximation of solutions to a nonlinear, strictly hyperbolic system modeling one-dimensional blood flow.  Typical physiological parameters lead to large pressure wave speeds and hence to a restrictive CFL condition for methods using explicit time stepping for the primitive governing equations.  This stringent condition is magnified for computationally intensive methods arising from weak formulations, in simulations of networks of vessels, and for simulations required over multiple cardiac cycles.

To mitigate these challenges, we presented a numerical method of characteristics approach applied to this system.  Unconditional stability and convergence of the method was proven.  The unconditional stability allows for more rapid simulations beyond the traditional CFL limitation.

To benchmark and test our method, we computed errors and convergence rates from a specified exact solution. Further, solution quality for a propagating Gaussian pressure pulse was compared to an approximation from a discontinuous Galerkin implementation.  As expected, numerical diffusion occurs in our method for coarse spatial discretizations, but a marginally more refined discretization yields much better results. Lastly, we applied the method to a network of vessels. From the timing results for the dG and NMC implementations, and due to the larger time step allowed for NMC, we conclude that NMC is at least 1000 times more efficient than dG.

Future work will entail clinical applications of vessel network simulations including the influence of organs, capillary beds, and the venous network. These full cardiovascular models, simulated with the numerical method of characteristics, will allow researchers and clinicians to investigate challenging physiological questions from a computational modeling perspective.  Furthermore, the efficiency of our approach allows for simulations over a large number of heart cycles on modestly sized computers. In turn, this opens a door for a much more computationally tractable approach for modeling these phenomena.

%%%%%%%%%%%%%%%%%%%%%%%%%%%%%%%%%%%%%%%%%%%%%%%%%%%%%%%
%%%%%%%%% N E W -- S E C T I O N %%%%%%%%%%%%%%%%%%%%%%
%%%%%%%%%%%%%%%%%%%%%%%%%%%%%%%%%%%%%%%%%%%%%%%%%%%%%%%

\section{Acknowledgments}
This work was funded in part by NSF grant NSF-DMS 1312391 and by a training fellowship from the Keck Center of the Gulf Coast Consortia, on the Training Program in Biomedical Informatics, National Library of Medicine (NLM) T15LM007093.

%% The Appendices part is started with the command \appendix;
%% appendix sections are then done as normal sections
%% \appendix

%% \section{}
%% \label{}

%% If you have bibdatabase file and want bibtex to generate the
%% bibitems, please use
%%
%%  \bibliographystyle{elsarticle-num} 
%%  \bibliography{<your bibdatabase>}

%% else use the following coding to input the bibitems directly in the
%% TeX file.

%%%%%%%%%%%%%%%%%%%%%%%%%%%%%%%%%%%%%%%%%%%%%%%%%%%%%%%
%%%%%%%%% N E W -- S E C T I O N %%%%%%%%%%%%%%%%%%%%%%
%%%%%%%%%%%%%%%%%%%%%%%%%%%%%%%%%%%%%%%%%%%%%%%%%%%%%%%

\bibliographystyle{elsarticle-num}
\bibliography{nmc}

\end{document}